\newtheorem{problem definition}{Problem Definition}
\newtheorem{theorem}{Theorem}
  \providecommand\BibTeX{{%
    \normalfont B\kern-0.5em{\scshape i\kern-0.25em b}\kern-0.8em\TeX}}}
\begin{document}

\title{Outlier-Resilient Web Service QoS Prediction}

\author{Fanghua Ye}
\affiliation{
\institution{University College London}
\city{London}
\country{UK}
}
\email{fanghua.ye.19@ucl.ac.uk}

\author{Zhiwei Lin}
\affiliation{
\institution{Sun Yat-Sen University}
\city{Guangzhou}
\country{China}
}
\email{linzhw25@mail2.sysu.edu.cn}

\author{Chuan Chen}
\affiliation{
\institution{Sun Yat-Sen University}
\city{Guangzhou}
\country{China}
}
\email{chenchuan@mail.sysu.edu.cn}

\author{Zibin Zheng}
\affiliation{
\institution{Sun Yat-Sen University}
\city{Guangzhou}
\country{China}
}
\email{zhzibin@mail.sysu.edu.cn}

\author{Hong Huang}
\affiliation{
\institution{Huazhong University of Science and Technology, Wuhan, China}
}
\email{honghuang@hust.edu.cn}


\begin{abstract}
  The proliferation of Web services makes it difficult for users to select the most appropriate one among numerous functionally identical or similar service candidates. Quality-of-Service (QoS) describes the non-functional characteristics of Web services, and it has become the key differentiator for service selection. However, users cannot invoke all Web services to obtain the corresponding QoS values due to high time cost and huge resource overhead. Thus, it is essential to predict unknown QoS values. Although various QoS prediction methods have been proposed, few of them have taken outliers into consideration, which may dramatically degrade the prediction performance. To overcome this limitation, we propose an outlier-resilient QoS prediction method in this paper. Our method utilizes Cauchy loss to measure the discrepancy between the observed QoS values and the predicted ones. Owing to the robustness of Cauchy loss, our method is resilient to outliers. We further extend our method to provide time-aware QoS prediction results by taking the temporal information into consideration. Finally, we conduct extensive experiments on both static and dynamic datasets. The results demonstrate that our method is able to  achieve better performance than state-of-the-art baseline methods.
\end{abstract}


\begin{CCSXML}
<ccs2012>
   <concept>
       <concept_id>10002951.10003260.10003304</concept_id>
       <concept_desc>Information systems~Web services</concept_desc>
       <concept_significance>500</concept_significance>
       </concept>
   <concept>
       <concept_id>10002951.10003227.10003351.10003269</concept_id>
       <concept_desc>Information systems~Collaborative filtering</concept_desc>
       <concept_significance>300</concept_significance>
       </concept>
 </ccs2012>
\end{CCSXML}

\ccsdesc[500]{Information systems~Web services}
\ccsdesc[300]{Information systems~Collaborative filtering}

\keywords{Web service, QoS prediction, outlier resilience, Cauchy loss}

\maketitle

\section{Introduction}

Web services provide interoperability among disparate software applications and play a key role in service-oriented computing \cite{bouguettaya2017service}. Over the past few years, numerous Web services have been published as indicated by the Web service repository--ProgrammableWeb\footnote{\url{https://www.programmableweb.com/}}. The proliferation of Web services brings great benefits in building versatile service-oriented applications and systems.

It is apparent that the quality of service-oriented applications and systems relies heavily on the quality of their component Web services. Thus, investigating the quality of Web services is an important task to ensure the reliability of the ultimate applications and the entire systems. The quality of Web services can be characterized by their functional and non-functional attributes. Quality-of-Service (QoS) represents the non-functional aspect of Web services, such as response time, throughput rate and failure probability \cite{papazoglou2007service, wu2015qos}. Since there are many functionally equivalent or similar services offered on the Web, investigating non-functional QoS properties becomes the major concern for service selection \cite{el2010tqos, zhang2011wspred}. However, the QoS value observed by users depends heavily on the Web service invocation context. Hence, the quality of the same Web service experienced by different users may be relatively different \cite{shao2007personalized}. For this reason, it is important to acquire personalized QoS values for different users. Considering that users cannot invoke all Web services to obtain personalized QoS values on their own due to high time cost and huge resource overhead \cite{wu2015qos, zhang2019covering}, predicting missing QoS values based on existing observations plays an essential role in obtaining approximate personalized QoS values.

Matrix factorization (MF) is arguably the most popular technique adopted for QoS prediction \cite{zhang2019covering, zheng2020web, ghafouri2020survey}. However, most existing MF-based QoS prediction methods directly utilize $L_2$-norm to measure the difference between the observed QoS values and the predicted ones \cite{zheng2013personalized, lo2015efficient, xu2016web, xie2016asymmetric, su2016web, wu2018collaborative, wu2019posterior}. It is well-known that $L_2$-norm is sensitive to outliers \cite{zhao2015l1, cao2015low, xu2019adaptive, zhu2017robust}. That is, the objective function value may be dominated by outliers during the $L_2$-norm minimization process, which will lead to severe approximation deviation between the observed normal values and the predicted ones. As a result, without taking outliers into consideration, existing MF-based methods may not achieve satisfactory performance. In recent years, there are some explorations on enhancing the robustness of MF-based QoS prediction methods by replacing $L_2$-norm with $L_1$-norm \cite{zhu2018similarity}. Although $L_1$-norm is more robust to outliers \cite{eriksson2010efficient, zheng2012practical, meng2013cyclic}, $L_1$-norm-based objective function is much harder to optimize and the solution is also unstable \cite{xu2012sparse, meng2013robust}. Moreover, $L_1$-norm is still sensitive to outliers, especially when outliers take significantly different values from the normal ones  \cite{ding2017l1, xu2015multi}. There are also some methods seeking to identify outliers explicitly by means of clustering algorithms \cite{zhu2010ws, wu2019data, hasny2016predicting}, which usually treat all the elements in the smallest cluster as outliers \cite{wu2015qos, he2014location}. However, it is difficult to choose the proper number of clusters. Consequently, these methods usually suffer from the misclassification issue. That is, either some outliers may not be eliminated successfully or some normal values may be selected as outliers falsely.

In this paper, we propose a robust QoS prediction method under the matrix factorization framework to deal with the aforementioned issues. Our method chooses to measure the discrepancy between the observed QoS values and the predicted ones by Cauchy loss \cite{barron2017general, li2018robust} instead of the $L_1$-norm loss or  $L_2$-norm loss. It has been shown that Cauchy loss is much more robust to outliers than the $L_1$-norm loss and $L_2$-norm loss \cite{li2018robust, xu2015multi}. Theoretically, Cauchy loss allows nearly half of the observations to be out of the normal range before it gives incorrect results \cite{mizera2002breakdown}. For a given QoS dataset, it is unlikely that nearly half of the observations are outliers. Thus, Cauchy loss is sufficient for outlier modeling and has the potential to provide better prediction results. Note also that our method does not explicitly identify outliers, which reduces the risk of misclassification and makes our method more general and more robust. In other words, our method is resilient to outliers. Considering that the QoS value of a Web service observed by a particular user may change over time, it is essential to provide time-aware personalized QoS prediction results. To achieve this goal, we further extend our method under the tensor factorization framework by taking the temporal information into consideration.

In summary, the main contributions of this paper include:
\begin{itemize}
	\item First, we propose a robust Web service QoS prediction method with outlier resilience. Our method measures the discrepancy between the observed QoS values and the predicted ones by Cauchy loss, which is robust to outliers.
	
	\item Second, we extend our method to provide time-aware QoS prediction results under the tensor factorization framework by taking the temporal information into consideration.
	
	\item Third, we conduct extensive experiments on both static and dynamic datasets to evaluate the performance of our method. The results demonstrate that our method can achieve better performance than state-of-the-art baseline methods.
\end{itemize}


\section{Unavoidability of QoS Outliers} \label{sec:mot}

Most existing QoS prediction methods assume that the QoS observations are reliable and rational. However, this assumption may not hold in the real world. This is because the observed QoS data can be affected by many factors. For example, there may be some malicious users submitting wrong QoS values deliberately. The service providers may also pretend to be service users and thus exaggerate the performance of their own Web services and depreciate the performance of their competitors' Web services. In addition, the QoS values observed by users are largely dependent on the invocation environment such as network latency and server overload, which may lead some of the QoS values to  deviate far from the normal range. In consideration of these complicated factors, we argue that it is highly possible that some of the QoS observations are outliers.

However, we are in lack of an oracle showing in advance which QoS observations are outliers. Here, we treat the rare extreme values which significantly differ from the remaining ones as outliers by following the definition in \cite{kim2016outlier}. 
To be more intuitive, in Figure~\ref{fig:diff1} (a) and (b), we show the distribution of both response time and throughput of 100 Web services invoked by 3 randomly selected users from a publicly available dataset--WS-DREAM\_dataset1\footnote{\url{https://github.com/wsdream/wsdream-dataset/tree/master/dataset1}}. As can be seen, although a user tends to have different usage experiences on different Web services, most QoS values of these Web services observed by the three users fall into a normal range. For example, the response time mainly falls in the interval of $[0, 2]$. However, there are also some observations deviating far from the normal range. As shown in Figure~\ref{fig:diff1} (a), the response time experienced by user 2 even reaches up to 14 seconds, which is far beyond the normal range. Needless to say, such kind of observations should be treated as outliers. We further demonstrate the distribution of response time and throughput of 3 Web services experienced by 100 different users in Figure~\ref{fig:diff1} (c) and (d). It can be observed that although the usage experiences of a Web service can vary widely among different users, the QoS values of the same Web service observed by the majority of users tend to fall into a normal range. Whereas, there are also some observations taking values far beyond the normal range. These phenomena verify the rationality of treating extreme values as outliers and also reveal the unavoidability of outliers in QoS observations.

\begin{figure}[t!]
	\centering
	\subfigure[{Response Time}]{
		\includegraphics[width=0.47\columnwidth]{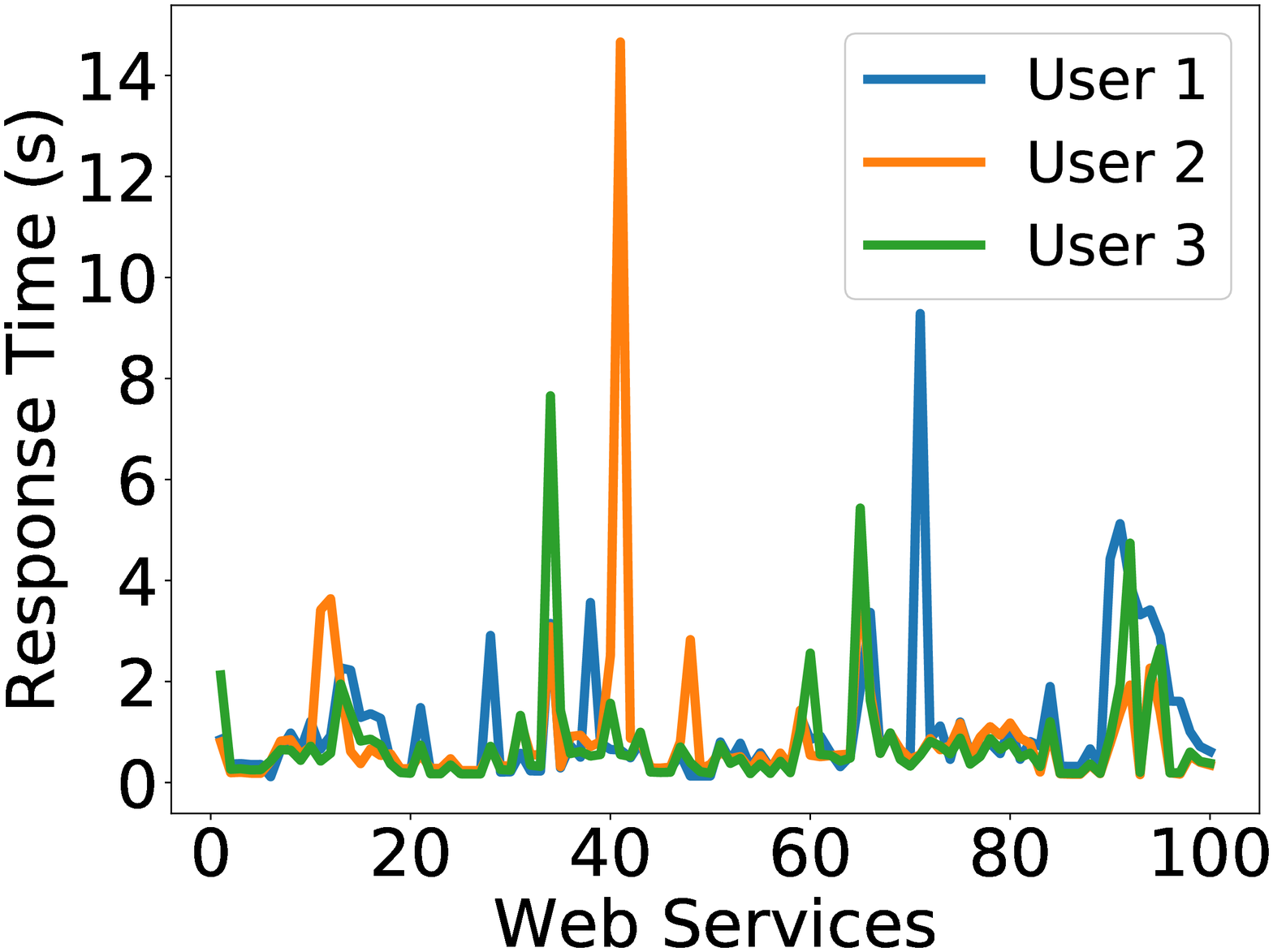}
	}
	\subfigure[{Throughput}]{
		\includegraphics[width=0.4835\columnwidth]{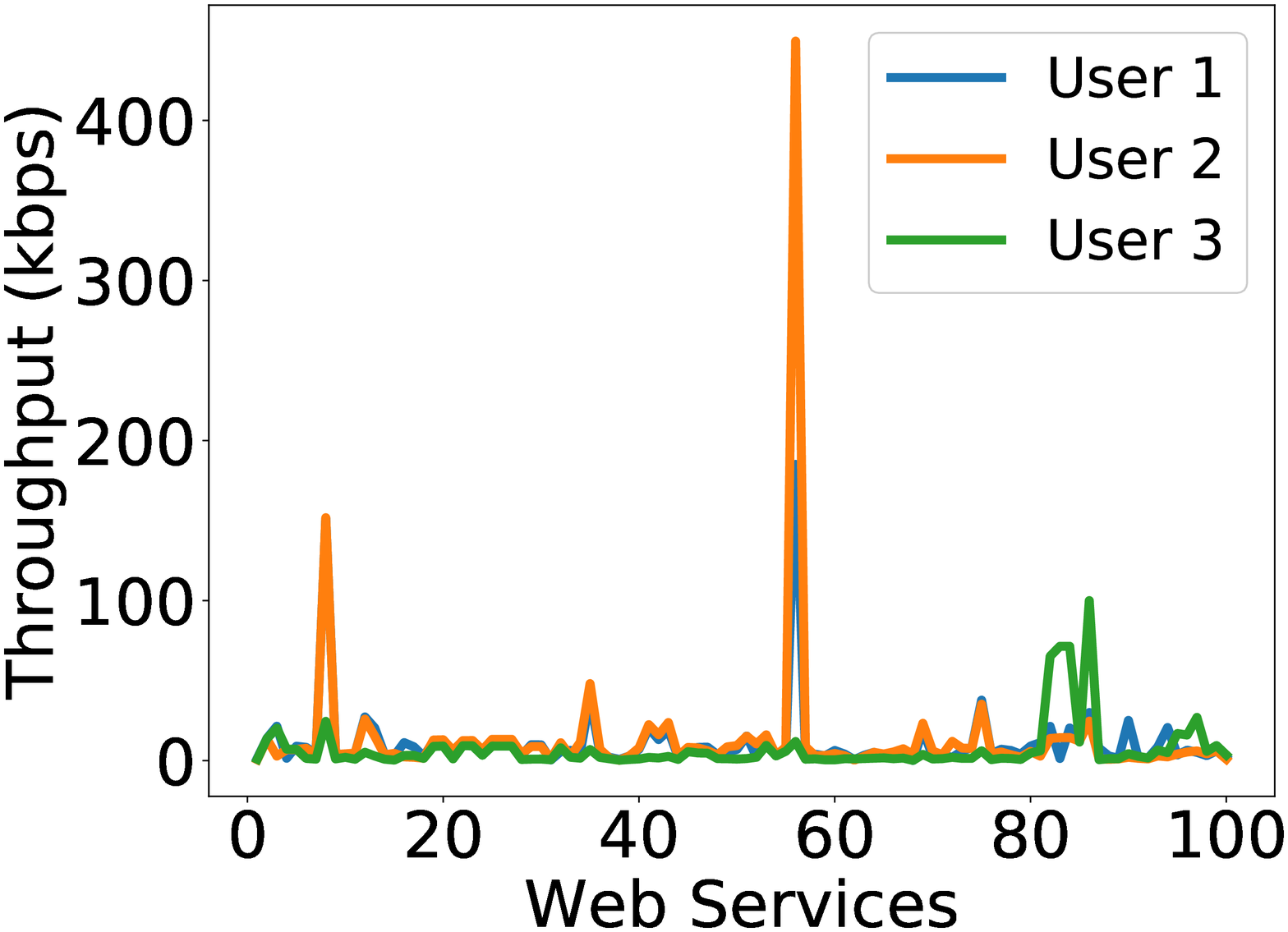}
	}
    \subfigure[{Response Time}]{
		\includegraphics[width=0.47\columnwidth]{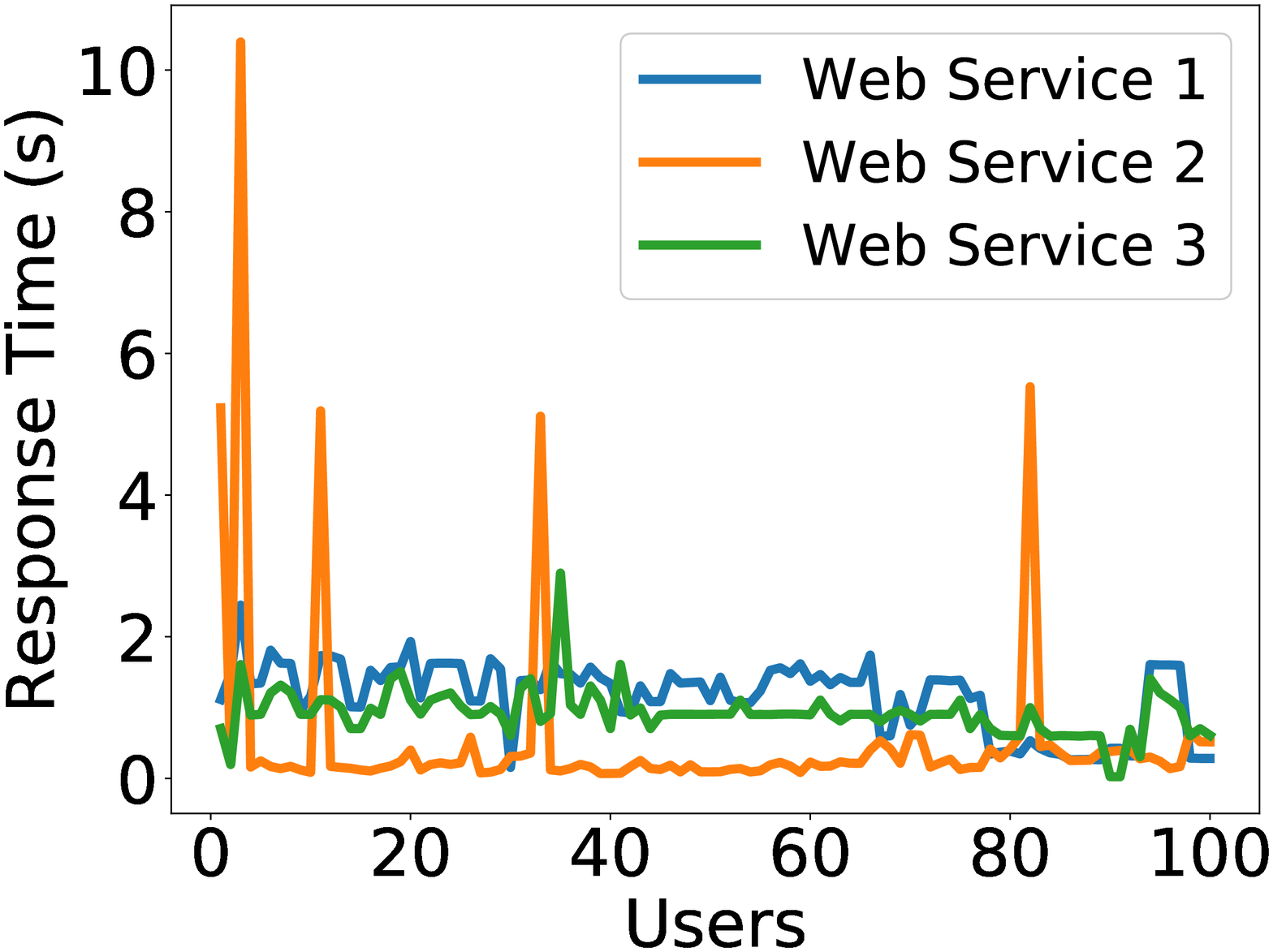}
	}
	\subfigure[{Throughput}]{
		\includegraphics[width=0.4835\columnwidth]{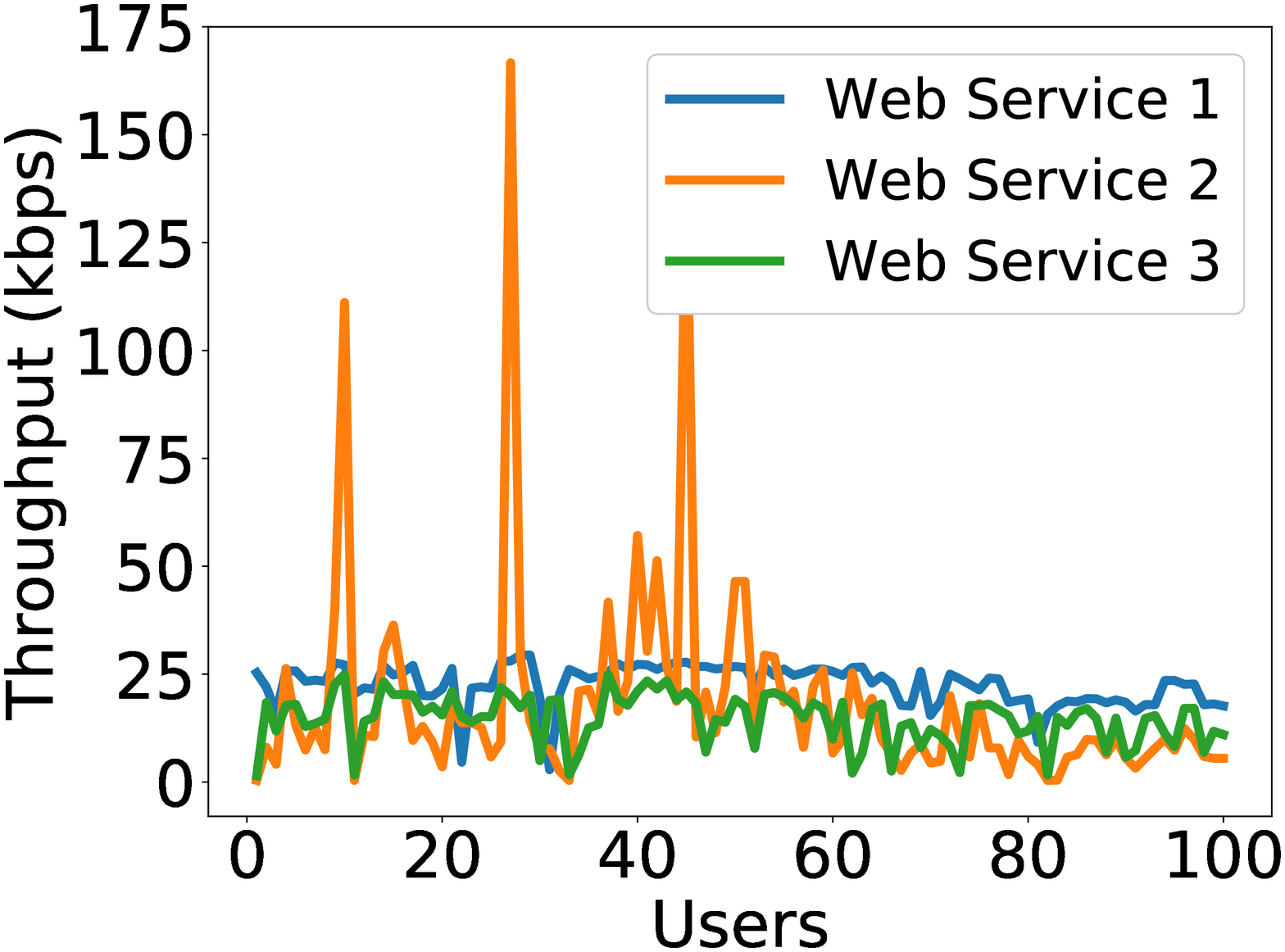}
	}
	\vspace*{-0.3cm}
	
	\caption{The distribution of response time and throughput.}
	\label{fig:diff1}
\end{figure}

\section{Preliminaries} \label{sec:pre}

Suppose that we are provided with a set of $m$ users and a set of $n$ Web services, then the QoS values between all users and Web services can be represented by a matrix ${\bm X} \in \mathbb{R}^{m \times n}$ whose entry ${\bm X}_{ij}$ denotes the QoS value of Web service $j$ observed by user $i$. Obviously, it is time-consuming and resource-consuming for each user to invoke all Web services to get the personalized QoS values. As a consequence, we typically have only partial observations between users and Web services, which means that lots of entries in ${\bm X}$ are null. The goal of QoS prediction is to predict these null entries by exploiting the information contained in existing observations.

\subsection{Problem Definition}

Let $\Omega$ denote the set of existing QoS observations, that is,
\begin{equation}
\begin{split}
\Omega = \{(i, j, {\bm X}_{ij}) ~\vert~ & \text{the QoS value ${\bm X}_{ij}$ between user $i$ and} \\
	     &\text{ Web service $j$ has been observed} \}.
\end{split}
\end{equation}
Then the problem of QoS prediction is defined as follows.

\textbf{Problem Statement:} Given a set of QoS observations $\Omega$, QoS prediction aims at predicting unknown QoS values by utilizing the information contained in $\Omega$.

\subsection{Matrix Factorization for QoS Prediction}

Generally speaking, matrix factorization tries to factorize a given matrix into the product of several low-rank factor matrices. In the context of QoS prediction, the basic framework of MF-based methods is to factorize matrix ${\bm X}$ into two low-rank factor matrices ${\bm U} \in \mathbb{R}^{m \times l}$ and ${\bm S} \in \mathbb{R}^{n \times l}$, i.e., ${\bm X} \approx {\bm U}{\bm S}^T$. Here, each row of ${\bm U}$ represents the latent feature of a user, and each row of ${\bm S}$ represents the latent feature of a Web service. The dimensionality of latent features is controlled by a parameter $l$ $(l \ll \min(m,n))$. Apparently, ${\bm U}{\bm S}^T$ should be as close to ${\bm X}$ as possible. As thus, the general objective function for MF-based QoS prediction methods can be derived as:
\begin{equation}
\min_{{\bm U}, {\bm S}} \mathcal{L}({\bm X}, {\bm U}{\bm S}^T) + \lambda \mathcal{L}_{reg},
\end{equation}
where $\mathcal{L}$ measures the degree of approximation between ${\bm U}{\bm S}^T$ and ${\bm X}$, $\mathcal{L}_{reg}$ denotes the regularization term to avoid over-fitting, and $\lambda$ represents the regularization coefficient.

The most widely adopted loss function in matrix factorization is the least square loss (i.e., $L_2$-norm loss), which is also the most commonly used loss function in MF-based QoS prediction methods \cite{zheng2013personalized, lo2015efficient, xu2016web, zhang2019covering, zhu2018similarity}. In this setting, the specific objective function can be clearly given as follows:
\begin{equation} \label{eqn:l2}
\min_{{\bm U}, {\bm S}} \frac{1}{2} \Vert {\bm I} \odot ({\bm X} -  {\bm U}{\bm S}^T) \Vert^2_2 + \lambda \mathcal{L}_{reg},
\end{equation}
where $\Vert \cdot \Vert_2$ denotes the $L_2$-norm which is calculated as the square root of the sum of squares of all entries, $\odot$ denotes the Hadamard product (i.e., entry-wise product), and ${\bm I} \in \mathbb{R}^{m \times n}$ denotes the indicator matrix whose entry ${\bm I}_{ij}$ indicates whether the QoS value of Web service $j$ has been observed by user $i$ or not. If user $i$ has the record of Web service $j$, ${\bm I}_{ij}$ is set to 1; otherwise, it is set to 0.


The objective function based on $L_2$-norm as in Eq.~\eqref{eqn:l2} is smooth and  can be optimized by the gradient descent method \cite{gemulla2011large}. However, the $L_2$-norm is sensitive to outliers (i.e., rare extreme values) \cite{zhao2015l1}. When the given observations contain outliers, the residuals between these outliers' corresponding entries in ${\bm X}$ and their approximation entries in ${\bm U}{\bm S}^T$ become huge due to the square operation. Therefore, when minimizing the objective function in Eq.~\eqref{eqn:l2}, more priorities are  given to these outliers, which unfortunately causes severe approximation deviation of the normal QoS values. As a result, the QoS prediction performance may degrade dramatically.

To make the model more robust to outliers, a common stategy is to replace $L_2$-norm with $L_1$-norm \cite{zhu2018similarity, wu2018multiple, ke2005robust, eriksson2010efficient}. Based on $L_1$-norm, the objective function is formularized as below:
\begin{equation} \label{eqn:l1}
\min_{{\bm U}, {\bm S}} \Vert {\bm I} \odot ({\bm X} -  {\bm U}{\bm S}^T) \Vert_1 + \lambda \mathcal{L}_{reg},
\end{equation}
where $\Vert \cdot \Vert_1$ denotes the $L_1$-norm which is calculated as the sum of the absolute values of all entries. Although $L_1$-norm is to some extent more robust to outliers than $L_2$-norm, the objective function based on $L_1$-norm as in Eq.~\eqref{eqn:l1} is a non-smooth function and it is much harder to optimize. What's more, although the large residuals due to outliers are not squared in $L_1$-norm, they may still be quite large relative to the normal ones and thus one would expect that they would influence the objective function as well \cite{ding2017l1}.

\section{Our Method} \label{sec:method}

As stated in the previous section, both $L_1$-norm and $L_2$-norm are sensitive to outliers. In order to make the MF-based methods more robust to outliers, we propose a novel QoS prediction method that utilizes Cauchy loss \cite{barron2017general} as the measurement of the discrepancy between the observed QoS values and the predicted ones. It has been shown that Cauchy loss is resistant to outliers \cite{xu2015multi, guan2017truncated, mizera2002breakdown}. Thus our method is expected to be robust to outliers.

\subsection{M-Estimator}

Before presenting the details of our method, we first introduce the concept of M-estimator. In robust statistics, M-estimators are a broad class of estimators, which represent the minima of particular loss functions \cite{huber2011robust}. Let $r_i$ denote the residual of the $i$-th datum, i.e., the difference between the $i$-th observation and its approximation. Then M-estimators try to optimize the following objective function:
\begin{equation}
\min \sum_{i} g(r_i),
\end{equation}
where function $g$ gives the contribution of each residual to the objective function. A reasonable function $g$ should satisfy the following four properties \cite{fox2002r}:
\begin{itemize}
	\item $g(x) \geq 0$, $\forall x$;
	\item $g(x) = g(-x)$, $\forall x$;
	\item $g(0) = 0$;
	\item $g(x)$ is non-decreasing in $|x|$, i.e., $g(x_1) \leq g(x_2)$, $ \forall |x_1| < |x_2|$.
\end{itemize}
The influence function of $g$ is defined as its first-order derivative: 
\begin{equation}
g'(x) = \frac{\mathrm{d}{g(x)}}{\mathrm{d}{x}}.
\end{equation}
The influence function $g'$ measures the influence of each datum on the value of the parameter estimate. For a robust M-estimator, it would be inferred that the influence of any single datum is insufficient to yield any significant offset \cite{xu2015multi}. Ideally, a robust M-estimator should have a bounded influence function.

Both $L_2$-norm loss and $L_1$-norm loss satisfy the four properties required by M-estimators. For the $L_2$ estimator with $g(x) = \frac{1}{2} x^2$, the influence function is $g'(x) = x$, which means that the influence of a datum on the parameter estimate grows linearly as the error increases. This confirms the non-robusteness of $L_2$ estimator to outliers. Although the $L_1$ estimator with $g(x) = |x|$ can reduce the influence of large errors due to its bounded influence function, it will still be affected by outliers since its influence function has no cut off point \cite{xu2015multi, li2018robust} ($\vert g'(x)\vert = 1$ even when $x \rightarrow \pm \infty$). Besides, $L_1$ estimator is not stable because $g(x) = |x|$ is not strictly convex in $x$. It follows that the influence function of a robust M-estimator should not only be bounded but also be insensitive to the increase of errors ($\vert g'(x)\vert \rightarrow 0$  when $x \rightarrow \pm \infty$). Cauchy estimator has been shown to possess such precious characteristics. The $g$ function of Cauchy estimator (i.e., Cauchy loss) is shown as follows:
\begin{equation}
g(x) = \ln \left(1 + \frac{x^2}{\gamma^2} \right),
\end{equation}
where $\gamma$ is a constant. The influence function is then calculated as:
\begin{equation}
g'(x) = \frac{2x}{\gamma^2 + x^2},
\end{equation}
which takes value in the range of $[-\frac{1}{\gamma}, \frac{1}{\gamma}]$. Moreover, $g'(x)$ tends to zero when $x$ goes to infinity. This indicates that the influence function of Cauchy estimator is insensitive to the increase of errors. Therefore, Cauchy estimator is robust to outliers. A comparison of different M-estimators is illustrated in Figure~\ref{fig:diff}.

\begin{figure}[t!]
	\centering
	\subfigure[{The Loss Function}]{
		\includegraphics[width=0.4743\columnwidth]{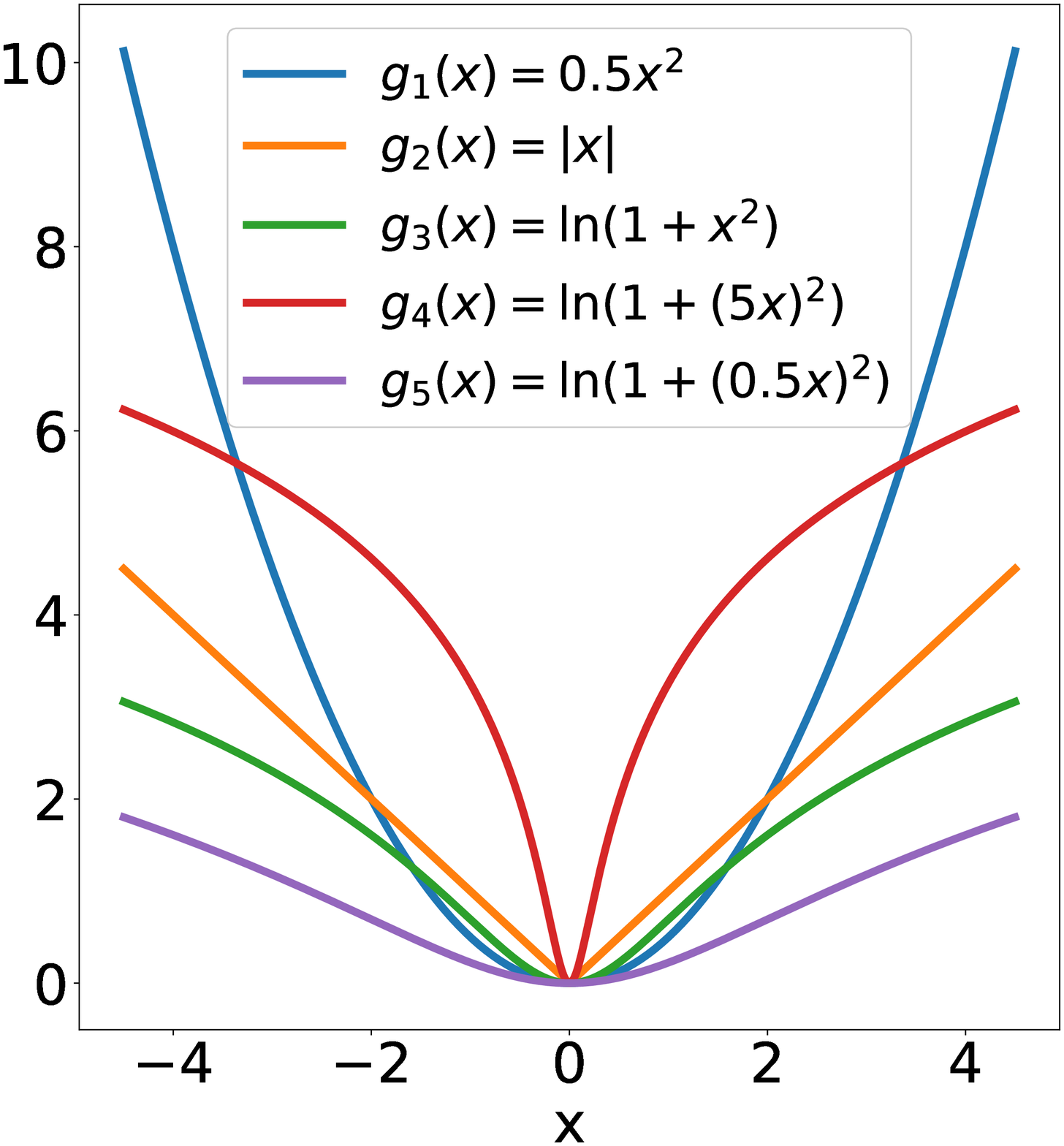}
	}
	\subfigure[{The Influence Function}]{
		\includegraphics[width=0.4785\columnwidth]{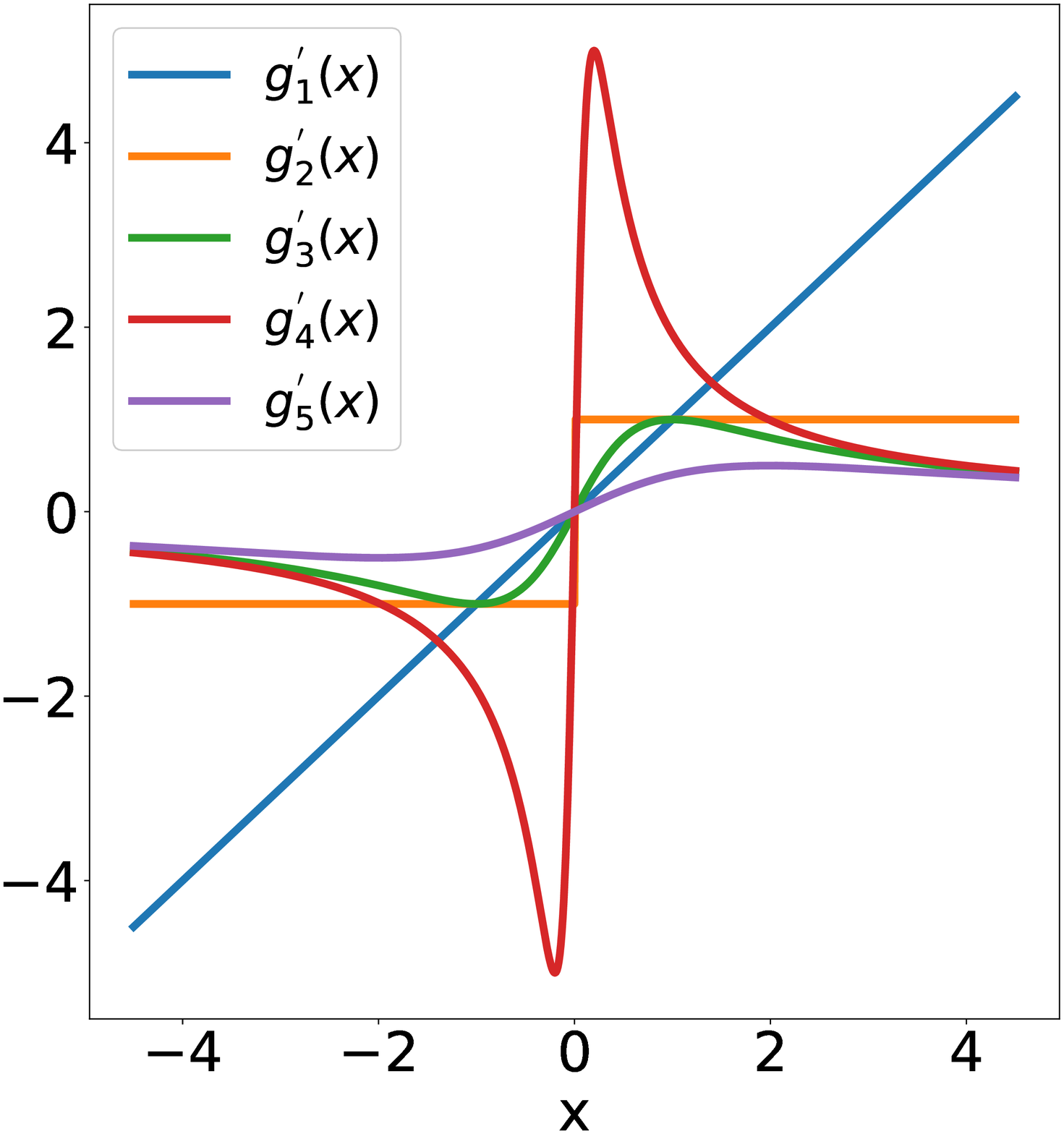}
	}
	\caption{Comparison of different M-estimators.}
	\label{fig:diff}
\end{figure}

\subsection{Model Formulation}

In view of Cauchy estimator's robustness, we choose Cauchy loss to construct the objective function of our method. Based on Cauchy loss, the objective function is derived as:
\begin{equation} \label{eqn:losscmf}
\begin{split}
\min_{{\bm U}, {\bm S}} \mathcal{L} =  \frac{1}{2} \sum_{i=1}^m \sum_{j=1}^n {\bm I}_{ij} \ln \left(1 + \frac{({\bm X}_{ij} - {\bm U}_i {\bm S}^T_j)^2}{\gamma^2} \right) 
           + \frac{\lambda_u}{2} \Vert {\bm U} \Vert^2_2 + \frac{\lambda_s}{2} \Vert {\bm S} \Vert^2_2,
\end{split}
\end{equation}
where ${\bm U}_i$ and ${\bm S}_j$ denote the $i$-th row of ${\bm U}$ and the $j$-th row of ${\bm S}$ respectively, $\lambda_u$ and  $\lambda_s$ represent the regularization coefficients.

The objective function in Eq.~\eqref{eqn:losscmf} can be efficiently optimized by the gradient descent method \cite{gemulla2011large}. Specifically, we choose to optimize ${\bm U}$ and ${\bm S}$ row by row. Then, we have the following update rules:
\begin{equation} \label{eqn:u1}
{\bm U}_i \leftarrow {\bm U}_i - \eta_u \frac{\partial{\mathcal{L}}}{\partial{{\bm U}_i}},
\end{equation}
\begin{equation} \label{eqn:s1}
{\bm S}_j \leftarrow {\bm S}_j - \eta_s \frac{\partial{\mathcal{L}}}{\partial{{\bm S}_j}},
\end{equation}
where $\eta_u$ and $\eta_s$ denote the learning rates for ${\bm U}$ and ${\bm S}$, and
\begin{equation}
\frac{\partial{\mathcal{L}}}{\partial{{\bm U}_i}} = \lambda_u {\bm U}_i - \sum_{j=1}^n {\bm I}_{ij} \frac{{\bm X}_{ij} - {\bm U}_i {\bm S}^T_j}{\gamma^2 + ({\bm X}_{ij} - {\bm U}_i {\bm S}^T_j)^2} {\bm S}_j,
\end{equation}
\begin{equation}
\frac{\partial{\mathcal{L}}}{\partial{{\bm S}_j}} = \lambda_s {\bm S}_j - \sum_{i=1}^m {\bm I}_{ij} \frac{{\bm X}_{ij} - {\bm U}_i {\bm S}^T_j}{\gamma^2 + ({\bm X}_{ij} - {\bm U}_i {\bm S}^T_j)^2} {\bm U}_i.
\end{equation}

The overall optimization procedure of our method is presented in Algorithm~\ref{alg:aaa}, whose time complexity is shown in Theorem~\ref{the:static}. 
\begin{theorem} \label{the:static}
	Let $r$ denote the number of iterations for Algorithm~\ref{alg:aaa} to achieve convergence and let $\rho$ denote the number of available entries in ${\bm X}$, then the time complexity of Algorithm~\ref{alg:aaa} is $\mathcal{O}(r \rho l)$.
\end{theorem}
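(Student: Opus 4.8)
The plan is to bound the cost of a single iteration of Algorithm~\ref{alg:aaa} and then multiply by the iteration count $r$. An iteration consists of sweeping through all $m$ rows of $\bm{U}$ via the update in Eq.~\eqref{eqn:u1} and all $n$ rows of $\bm{S}$ via Eq.~\eqref{eqn:s1}, so the whole argument reduces to pricing the two gradient evaluations and showing that their aggregate cost across one sweep is $\mathcal{O}(\rho l)$.

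First I would examine the gradient $\partial \mathcal{L} / \partial \bm{U}_i$. The key structural feature is the factor $\bm{I}_{ij}$, which annihilates every term corresponding to an unobserved entry; hence the summation effectively ranges only over the set of Web services that user $i$ has actually invoked. Writing $\rho_i$ for the number of such observed entries in row $i$, I would argue that each surviving term costs $\mathcal{O}(l)$: computing the inner product $\bm{U}_i \bm{S}^T_j$ takes $l$ multiply-adds, the scalar coefficient $(\bm{X}_{ij} - \bm{U}_i\bm{S}^T_j) / (\gamma^2 + (\bm{X}_{ij} - \bm{U}_i\bm{S}^T_j)^2)$ is then formed in constant time, and scaling $\bm{S}_j$ by it costs another $\mathcal{O}(l)$. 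Adding the regularization contribution $\lambda_u \bm{U}_i$ costs only $\mathcal{O}(l)$, so one gradient evaluation runs in $\mathcal{O}(\rho_i l)$.

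The main step is then the aggregation. Summing over all users gives $\sum_{i=1}^m \mathcal{O}(\rho_i l) = \mathcal{O}(l \sum_{i=1}^m \rho_i)$, and since each observed entry of $\bm{X}$ is counted in exactly one row, $\sum_{i=1}^m \rho_i = \rho$; thus one full sweep of the $\bm{U}$-updates costs $\mathcal{O}(\rho l)$. By an identical argument applied column-wise—noting that $\sum_{j=1}^n \rho_j = \rho$ as well, where $\rho_j$ is the number of observed entries in column $j$—the sweep of $\bm{S}$-updates also costs $\mathcal{O}(\rho l)$, so a single iteration is $\mathcal{O}(\rho l)$.

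Finally, multiplying by the $r$ iterations needed for convergence yields the claimed $\mathcal{O}(r\rho l)$. The only conceptual obstacle I anticipate is making the sparsity bookkeeping precise: a naive reading of the double sum over $i$ and $j$ suggests an $\mathcal{O}(mnl)$ cost per sweep, and the crux is recognizing that the indicator $\bm{I}$ collapses this to the number of observed entries $\rho$ rather than the full $mn$, with the identity $\sum_i \rho_i = \sum_j \rho_j = \rho$ being exactly what ties the row-wise and column-wise passes together.
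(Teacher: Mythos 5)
Your proposal is correct and follows essentially the same route as the paper's own proof: bound the per-iteration cost by exploiting the sparsity induced by the indicator $\bm{I}$, observe that the observed entries aggregate to $\rho$ across rows and across columns, and multiply by $r$. The only cosmetic difference is that you absorb the $\mathcal{O}(l)$ regularization cost of each row into $\mathcal{O}(\rho_i l)$ (implicitly assuming $\rho_i \geq 1$), whereas the paper keeps it explicit as $\mathcal{O}(ml + \rho l)$ and $\mathcal{O}(nl + \rho l)$ and then invokes $m, n < \rho$ to simplify.
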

\begin{proof}
	The main time cost of Algorithm~\ref{alg:aaa} lies in the updates of ${\bm U}$ and ${\bm S}$. In each iteration, updating ${\bm U}$ takes $\mathcal{O}(ml + \rho l)$ time and updating ${\bm S}$ takes $\mathcal{O}(nl + \rho l)$ time. Since both $m$ and $n$ are less than $\rho$, the time complexity of updating ${\bm U}$ and ${\bm S}$ can both be simplified as $\mathcal{O}(\rho l)$. Thus, the overall time complexity is of order $\mathcal{O}(r \rho l)$.
\end{proof}

\begin{algorithm}[t!]
	\caption{Algorithm for Static QoS Prediction}
	\label{alg:aaa}
	\begin{algorithmic}[1]
		\REQUIRE ${\bm X} \in \mathbb{R}^{m \times n}$, $l$, $\gamma$, $\lambda_u$, $\lambda_s$, $\eta_u$, $\eta_s$;
		\ENSURE ${\bm U} \in \mathbb{R}^{m \times l}$, ${\bm S} \in \mathbb{R}^{n \times l}$;
		\STATE Randomly initialize ${\bm U}$ and ${\bm S}$;
		\REPEAT
		\FOR {$i = 1$ to $m$}
		    \STATE Update ${\bm U}_i$ according to Eq.~\eqref{eqn:u1};
		\ENDFOR
		\FOR {$j = 1$ to $n$}
		\STATE Update ${\bm S}_j$ according to Eq.~\eqref{eqn:s1};
		\ENDFOR
		\UNTIL{Convergence}
		\RETURN ${\bm U}$, ${\bm S}$;
	\end{algorithmic}
\end{algorithm}

\subsection{Extension for Time-Aware QoS Prediction}

As pointed out in \cite{zhang2011wspred}, the QoS performance of Web services is highly related to the invocation time because the service status (e.g., number of users) and the network environment (e.g., network speed) may change over time. Thus, it is essential to provide time-aware personalized QoS information to help users make service selection at runtime \cite{zhu2017online}. In this part, we aim to extend our method to make it suitable for time-aware personalized QoS prediction.

To achieve the goal, we choose to extend our method under the tensor factorization framework. A tensor is a multidimensional or $N$-way array \cite{kolda2009tensor}. An $N$-way tensor is denoted as $\bm{\mathcal{X}} \in \mathbb{R}^{I_1 \times I_2 \times \cdots \times I_N}$, which has $N$ indices ($i_1, i_2, \cdots, i_N$)
and its entries are denoted by $\bm{\mathcal{X}}_{i_1 i_2 \cdots i_N}$. In this sense, a tensor can be treated as a generalized matrix and a matrix can also be treated as a two-way tensor.

For time-aware QoS prediction, we need to take the temporal information of QoS values into consideration. According to the definition of tensors, it is clear that we can model QoS observations with temporal information as a three-way tensor $\bm{\mathcal{X}} \in \mathbb{R}^{m \times n \times t}$ whose each entry $\bm{\mathcal{X}}_{ijk}$ represents the QoS value of Web service $j$ observed by user $i$ at time $k$. Here, $t$ denotes the total number of time intervals. Accordingly, we can use an indicator tensor   $\bm{\mathcal{I}} \in \mathbb{R}^{m \times n \times t}$ to show whether the QoS values have been observed or not. If user $i$ has the record of Web service $j$ at time $k$, $\bm{\mathcal{I}}_{ijk}$ is set to 1; otherwise, its value is set to 0. To predict the unknown QoS values in $\bm{\mathcal{X}}$, similar to MF-based methods, we first factorize $\bm{\mathcal{X}}$ to learn the latent features of users, Web services and contexts, respectively, and then leverage the interaction among them to predict QoS values. Specifically,  we adopt the canonical polyadic (CP) decomposition method \cite{rabanser2017introduction} to factorize $\bm{\mathcal{X}}$ into three low-rank factor matrices $\bm{U} \in \mathbb{R}^{m \times l}$, $\bm{S} \in \mathbb{R}^{n \times l}$ and $\bm{T} \in \mathbb{R}^{t \times l}$. Then, $\bm{\mathcal{X}}$ is approximated in the following way:
\begin{equation}
\bm{\mathcal{X}} \approx  \bm{\mathcal{\hat X}} = \sum_{\ell = 1}^l \bm{U}^{(\ell)} \circ \bm{S}^{(\ell)} \circ \bm{T}^{(\ell)},
\end{equation}
where $\bm{U}^{(\ell)} \in \mathbb{R}^{m}$, $\bm{S}^{(\ell)} \in \mathbb{R}^{n}$ and $\bm{T}^{(\ell)} \in \mathbb{R}^{t}$ denote the $\ell$-th column of  $\bm{U}$, $\bm{S}$ and $\bm{T}$ respectively, and $\circ$ represents the vector outer product. In this way, each entry $\bm{\mathcal{X}}_{ijk}$ is approximated by:
\begin{equation} \label{eqn:pair}
\begin{split}
\bm{\mathcal{X}}_{ijk} \approx  \bm{\mathcal{\hat X}}_{ijk} &= \sum_{\ell = 1}^l \bm{U}_{i\ell} \bm{S}_{j\ell} \bm{T}_{k\ell}.
\end{split}
\end{equation}

For tensor factorization, nonnegative constraints are usually enforced on the factor matrices to promote the model's interpretability \cite{shashua2005non}. We also add nonnegative constraints to all the factor matrices ${\bm U}$, ${\bm S}$ and ${\bm T}$. Then together with the Cauchy loss, we derive the objective function for time-aware QoS prediction as below:
\begin{equation} \label{eqn:lossctf}
\begin{split}
\min_{{\bm U}, {\bm S}, {\bm T}} & \mathcal{L}' =  \frac{1}{2} \sum_{i=1}^m \sum_{j=1}^n \sum_{k=1}^t \bm{\mathcal{I}}_{ijk} \ln \left(1 + \frac{(\bm{\mathcal{X}}_{ijk} - \bm{\mathcal{\hat X}}_{ijk})^2}{\gamma^2} \right) \\
& ~ \quad \quad  + \frac{\lambda_u}{2} \Vert {\bm U} \Vert^2_2 + \frac{\lambda_s}{2} \Vert {\bm S} \Vert^2_2 + \frac{\lambda_t}{2} \Vert {\bm T} \Vert^2_2, \\
s.t. ~& {\bm U} \geq {\bm 0}, {\bm S} \geq {\bm 0}, {\bm T} \geq {\bm 0},
\end{split}
\end{equation}
where $\lambda_t$ denotes the regularization coefficient for matrix $\bm{T}$.

\begin{algorithm}[t!]
	\caption{Algorithm for Time-Aware QoS Prediction}
	\label{alg:bbb}
	\begin{algorithmic}[1]
		\REQUIRE $\bm{\mathcal{X}} \in \mathbb{R}^{m \times n \times t}$, $l$, $\gamma$, $\lambda_u$, $\lambda_s$, $\lambda_t$;
		\ENSURE ${\bm U} \in \mathbb{R}^{m \times l}$, ${\bm S} \in \mathbb{R}^{n \times l}$,  ${\bm T} \in \mathbb{R}^{t \times l}$;
		\STATE Randomly initialize ${\bm U} \geq {\bm 0}$, ${\bm S} \geq {\bm 0}$ and ${\bm T} \geq {\bm 0}$;
		\REPEAT
		\FOR {$i = 1$ to $m$}
		\STATE Update ${\bm U}_i$ according to Eq.~\eqref{eqn:u2};
		\ENDFOR
		\FOR {$j = 1$ to $n$}
		\STATE Update ${\bm S}_j$ according to Eq.~\eqref{eqn:s2};
		\ENDFOR
		\FOR {$k = 1$ to $t$}
		\STATE Update ${\bm T}_k$ according to Eq.~\eqref{eqn:t2};
		\ENDFOR
		\UNTIL{Convergence}
		\RETURN ${\bm U}$, ${\bm S}$, ${\bm T}$;
	\end{algorithmic}
\end{algorithm}

Due to the nonnegative constraints, we cannot adopt the gradient descent method to optimize the objective function in Eq.~\eqref{eqn:lossctf} any more. Alternatively, we use the multiplicative updating (MU) algorithm \cite{lee2001algorithms} to solve Eq.~\eqref{eqn:lossctf}. To be more specific, MU alternately updates ${\bm U}$, ${\bm S}$ and ${\bm T}$ with the other two being fixed in each iteration. Although the objective function in Eq.~\eqref{eqn:lossctf} is nonconvex over ${\bm U}$, ${\bm S}$ and ${\bm T}$ simultaneously, it is a convex function in each variable when the other two are fixed. Thus we can derive a closed-form update rule for each variable under the Karush-Kuhn-Tucker (KKT) conditions \cite{boyd2004convex}. The detailed update rules are listed as follows: 
\begin{equation} \label{eqn:u2}
{\bm U}_i \leftarrow {\bm U}_i \odot \frac{\sum_{j=1}^n \sum_{k=1}^t \bm{\mathcal{I}}_{ijk} \bm{\Delta}_{ijk} \bm{\mathcal{X}}_{ijk} ({\bm S}_j \odot {\bm T}_k)}{\sum_{j=1}^n \sum_{k=1}^t \bm{\mathcal{I}}_{ijk} \bm{\Delta}_{ijk} \bm{\mathcal{\hat{X}}}_{ijk} ({\bm S}_j \odot {\bm T}_k) + \lambda_u {\bm U}_i},
\end{equation}
\begin{equation} \label{eqn:s2}
{\bm S}_j \leftarrow {\bm S}_j \odot \frac{\sum_{i=1}^m \sum_{k=1}^t \bm{\mathcal{I}}_{ijk} \bm{\Delta}_{ijk} \bm{\mathcal{X}}_{ijk} ({\bm U}_i \odot {\bm T}_k)}{\sum_{i=1}^m \sum_{k=1}^t \bm{\mathcal{I}}_{ijk} \bm{\Delta}_{ijk} \bm{\mathcal{\hat{X}}}_{ijk} ({\bm U}_i \odot {\bm T}_k) + \lambda_s {\bm S}_j},
\end{equation}
\begin{equation} \label{eqn:t2}
{\bm T}_k \leftarrow {\bm T}_k \odot \frac{\sum_{i=1}^m \sum_{j=1}^n \bm{\mathcal{I}}_{ijk} \bm{\Delta}_{ijk} \bm{\mathcal{X}}_{ijk} ({\bm U}_i \odot {\bm S}_j)}{\sum_{i=1}^m \sum_{j=1}^n \bm{\mathcal{I}}_{ijk} \bm{\Delta}_{ijk} \bm{\mathcal{\hat{X}}}_{ijk} ({\bm U}_i \odot {\bm S}_j) + \lambda_t {\bm T}_k}.
\end{equation}
In the above equations, $\bm{\Delta}_{ijk}$ is defined as $\bm{\Delta}_{ijk} = \frac{1}{\gamma^2 + (\bm{\mathcal{X}}_{ijk} - \bm{\mathcal{\hat X}}_{ijk})^2}$.

Algorithm~\ref{alg:bbb} summarizes the overall optimization procedure of our time-aware QoS prediction method. Since  Algorithm~\ref{alg:bbb} updates ${\bm U}$, ${\bm S}$ and ${\bm T}$ alternately and each update decreases the objective function value monotonically,  it is guaranteed to converge to a local minimal solution. The time complexity of Algorithm~\ref{alg:bbb} is shown in Theorem~\ref{the:dynamic}.
\begin{theorem} \label{the:dynamic}
	Let $r'$ denote the number of iterations for Algorithm~\ref{alg:bbb} to achieve convergence and let $\rho'$ denote the number of available entries in $\bm{\mathcal{X}}$, then the time complexity of Algorithm~\ref{alg:bbb} is $\mathcal{O}(r' \rho' l)$.
\end{theorem}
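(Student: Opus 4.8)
The plan is to mirror the proof of Theorem~\ref{the:static}, extending the two-factor argument to the three-factor (tensor) setting. As in the static case, essentially all of the running time is spent on the alternating updates of ${\bm U}$, ${\bm S}$ and ${\bm T}$ inside the repeat loop, while the initialization and the convergence test are negligible. So the goal reduces to bounding the cost of a single pass through the three update sweeps of Algorithm~\ref{alg:bbb} and then multiplying by the iteration count $r'$.

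First I would bound the cost of updating ${\bm U}$ via Eq.~\eqref{eqn:u2}. The crucial observation is that the indicator $\bm{\mathcal{I}}_{ijk}$ restricts the double sum for each row ${\bm U}_i$ to the observed entries whose first index is $i$. For each such observed entry I would account for the elementary pieces: the Hadamard product ${\bm S}_j \odot {\bm T}_k$ costs $\mathcal{O}(l)$, the reconstructed value $\bm{\mathcal{\hat X}}_{ijk} = \sum_{\ell} {\bm U}_{i\ell}{\bm S}_{j\ell}{\bm T}_{k\ell}$ costs $\mathcal{O}(l)$, and the scalar $\bm{\Delta}_{ijk}$ then costs $\mathcal{O}(1)$ more; accumulating the resulting length-$l$ vectors into the numerator and denominator is again $\mathcal{O}(l)$. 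Hence each observed entry contributes $\mathcal{O}(l)$ work, and the final componentwise product and division producing the new ${\bm U}_i$ add only $\mathcal{O}(l)$ per row. Summing over all rows, and using that the observed entries partition across rows so that their counts total $\rho'$, the full ${\bm U}$-update costs $\mathcal{O}(\rho' l + m l)$.

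By the symmetry of Eqs.~\eqref{eqn:s2} and~\eqref{eqn:t2}, the same reasoning gives $\mathcal{O}(\rho' l + n l)$ for the ${\bm S}$-update and $\mathcal{O}(\rho' l + t l)$ for the ${\bm T}$-update, since each observed entry is touched exactly once in each of the three sweeps. Assuming every user, service and time interval carries at least one observation, so that $m, n, t \le \rho'$, each of these simplifies to $\mathcal{O}(\rho' l)$, making the cost of one full iteration $\mathcal{O}(\rho' l)$. Multiplying by the $r'$ iterations needed for convergence yields the claimed bound $\mathcal{O}(r' \rho' l)$.

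The only point requiring care — the main obstacle, such as it is — is making the sparsity bookkeeping precise: I must verify that recomputing $\bm{\mathcal{\hat X}}_{ijk}$ and $\bm{\Delta}_{ijk}$ on the fly does not inflate the per-entry cost beyond $\mathcal{O}(l)$, and that the summations over $j,k$ (respectively $i,k$ and $i,j$) genuinely range only over observed positions rather than the full $m \times n \times t$ grid, so that the total number of entry touches in each sweep is exactly $\rho'$ and not $mnt$. Once this is established the arithmetic is routine and the bound follows.
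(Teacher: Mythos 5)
Your proposal is correct and follows essentially the same route as the paper: the paper's own proof simply mirrors the static case, noting that each iteration costs $\mathcal{O}(\rho' l)$ for the updates of ${\bm U}$, ${\bm S}$ and ${\bm T}$, and multiplies by the iteration count $r'$. Your version just makes explicit the per-entry $\mathcal{O}(l)$ accounting and the assumption $m, n, t \le \rho'$ that the paper leaves implicit (and uses in the static proof), so there is nothing to correct.
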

\begin{proof}
	The proof is similar to that of Theorem~\ref{the:static}. In each iteration, it takes $\mathcal{O}(\rho' l)$ time to update $\bm{U}$, $\bm{S}$ and $\bm{T}$. Therefore, the total time complexity of Algorithm~\ref{alg:bbb} is of order $\mathcal{O}(r' \rho' l)$.
\end{proof}


It is worth mentioning that in both Algorithm~\ref{alg:aaa} and Algorithm~\ref{alg:bbb}, we do not detect outliers explicitly. Thus our method will not suffer from the problem of misclassification, which indicates that our method is more resilient and more robust to outliers.


\section{Experiments} \label{sec:exp}

In this section, we conduct a set of experiments on both static QoS prediction and time-aware QoS prediction to evaluate the efficiency and effectiveness of our method by comparing it with several state-of-the-art QoS prediction methods. We implement our method and all baseline methods in Python 3.7. And all the experiments are conducted on a server with two 2.4GHz Intel Xeon CPUs and 128GB main memory running Ubuntu 14.04.5 (64-bit)\footnote{The source code is available at \url{https://github.com/smartyfh/CMF-CTF}}.

\subsection{Datasets}

We conduct all experiments on a publicly available dataset collection--WS-DREAM\footnote{\url{https://github.com/wsdream/wsdream-dataset}}, which was collected from real-world Web services. WS-DREAM contains both static and dynamic QoS datasets. The static dataset  describes real-world QoS measurements, including both response time and throughput  values, obtained from 339 users on 5825 Web services. The dynamic dataset describes real-world QoS measurements from 142 users on 4500 Web services over 64 consecutive time slices (at 15-minute interval). The dynamic dataset also includes records of both response time and throughput values. The statistics of the datasets are presented in Table~\ref{tabdata}.

\begin{table}[t!]
    \small
	\caption{Statistics of QoS Data}
	\vspace*{-0.3cm}
	\begin{center}
        \setlength{\tabcolsep}{0.77mm}
		\begin{tabular}{c|c|c c c c c }
\hline
			\textbf{Type}                     & \textbf{QoS Attributes} & \#\textbf{User} & \#\textbf{Service} & \#\textbf{Time} & \textbf{Range} & \textbf{Mean} \\
\hline
\multirow{2}{*}{Static}  & Response Time (s)  & 339     & 5825       & -      &   0-20    &  0.9086    \\
                         & Throughput (kbps)  & 339     & 5825       & -      &   0-1000    &  47.5617    \\
\hline
\multirow{2}{*}{Dynamic} & Response Time (s)  & 142     & 4500       & 64     &    0-20   &  3.1773   \\
                         & Throughput (kbps)  & 142     & 4500       & 64     &   0-6727    &  11.3449   \\
\hline
		\end{tabular}
		\label{tabdata}
	\end{center}
\end{table}

\subsection{Evaluation Metrics}

The most commonly used evaluation metrics for QoS prediction include mean absolute error (MAE) \cite{zheng2011qos} and root mean square error (RMSE) \cite{zheng2011qos}. Let $\Pi$ denote the set of QoS values to be predicted (i.e., $\Pi$ is the testing set) and let $N = |\Pi|$, then MAE is calculated as:
\begin{equation}
MAE = \frac{\sum_{q \in \Pi} |q - \hat{q}|}{N},
\end{equation}
and RMSE is calculated as:
\begin{equation}
RMSE = \sqrt{\frac{\sum_{q \in \Pi} (q - \hat{q})^2}{N}},
\end{equation}
where $\hat{q}$ denotes the predicted value for the observation $q$. For both MAE and RMSE, smaller values indicate better performance.

However, according to the definition of MAE and RMSE, we can see that both MAE and RMSE are sensitive to outliers, which means that if $\Pi$ contains outliers, then MAE and RMSE cannot truly reflect the QoS prediction performance. For example, suppose that $q_* \in \Pi$ is an outlier, then in order to get a small MAE value and RMSE value, the predicted $\hat{q}_*$ should be close to $q_*$ rather than the normal QoS value. As thus, a smaller MAE or RMSE value may not really indicate better performance. To overcome this limitation, we eliminate outliers from $\Pi$ when calculating MAE and RMSE. Note that we do not have groundtruth labels for outliers. Therefore, we need to detect outliers from scratch. To achieve this goal, we employ the $i$Forest (short for isolation forest) method  \cite{liu2008isolation, liu2012isolation} for outlier detection. $i$Forest detects outliers purely based on the concept of isolation without employing any distance or density measure, making $i$Forest quite efficient and robust. $i$Forest will calculate an outlier score for each datum. The score takes value in the range of $[0, 1]$ and a larger value indicates more possibility to be outliers. Based on the outlier score, we can set the number of outliers flexibly. To intuitively show the effectiveness of $i$Forest, we report the outlier detection results of a randomly selected Web service from the static dataset in Figure~\ref{fig:outlierexp}, where the outlier ratio is set to 0.05. As can be seen, $i$Forest demonstrates good performance in outlier detection. It can detect both overinflated and underinflated outliers.

\begin{figure}[t!]
	\centering
	\subfigure[{Response Time}]{
		\includegraphics[width=0.48\columnwidth]{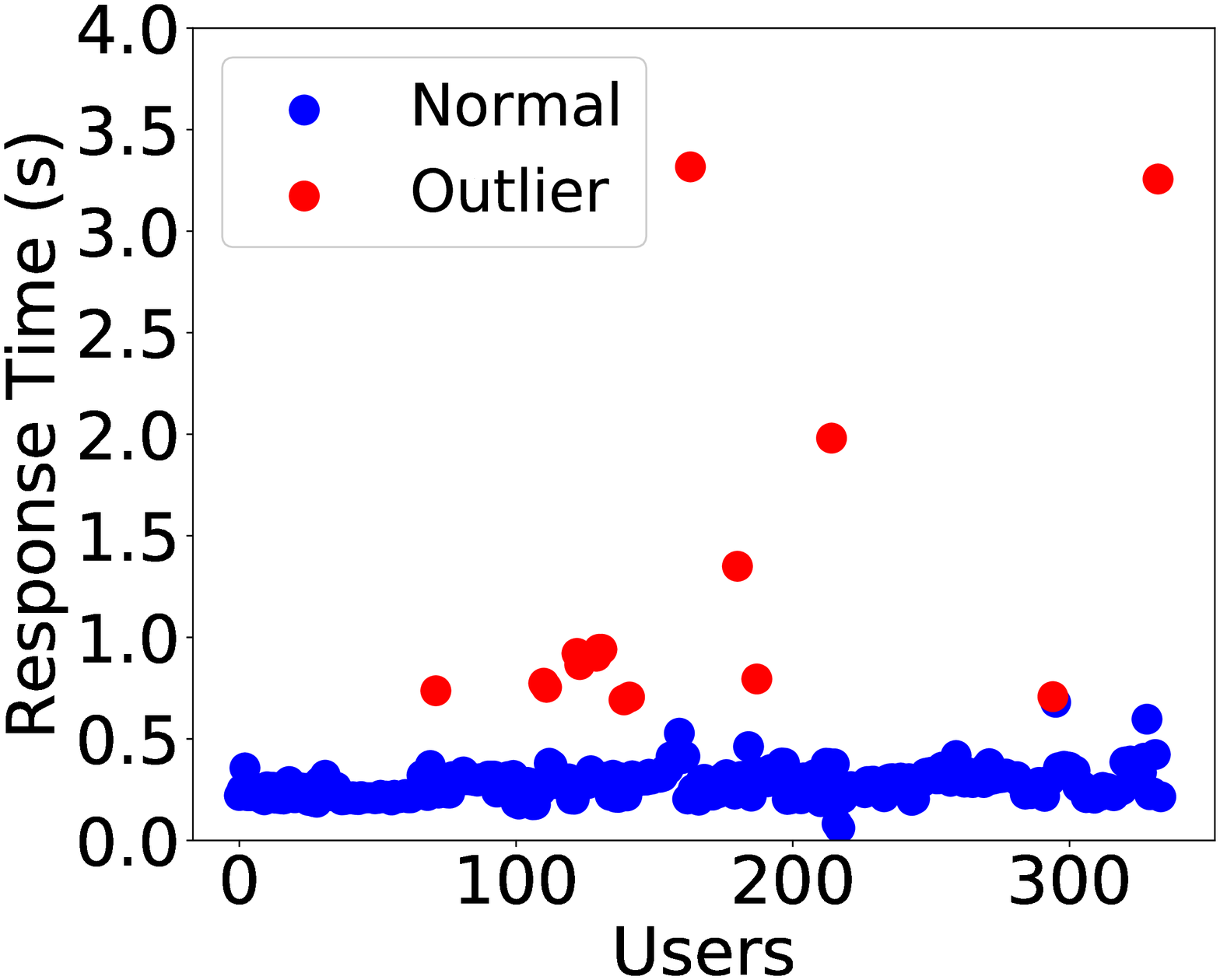}
	}
	\subfigure[{Throughput}]{
		\includegraphics[width=0.472\columnwidth]{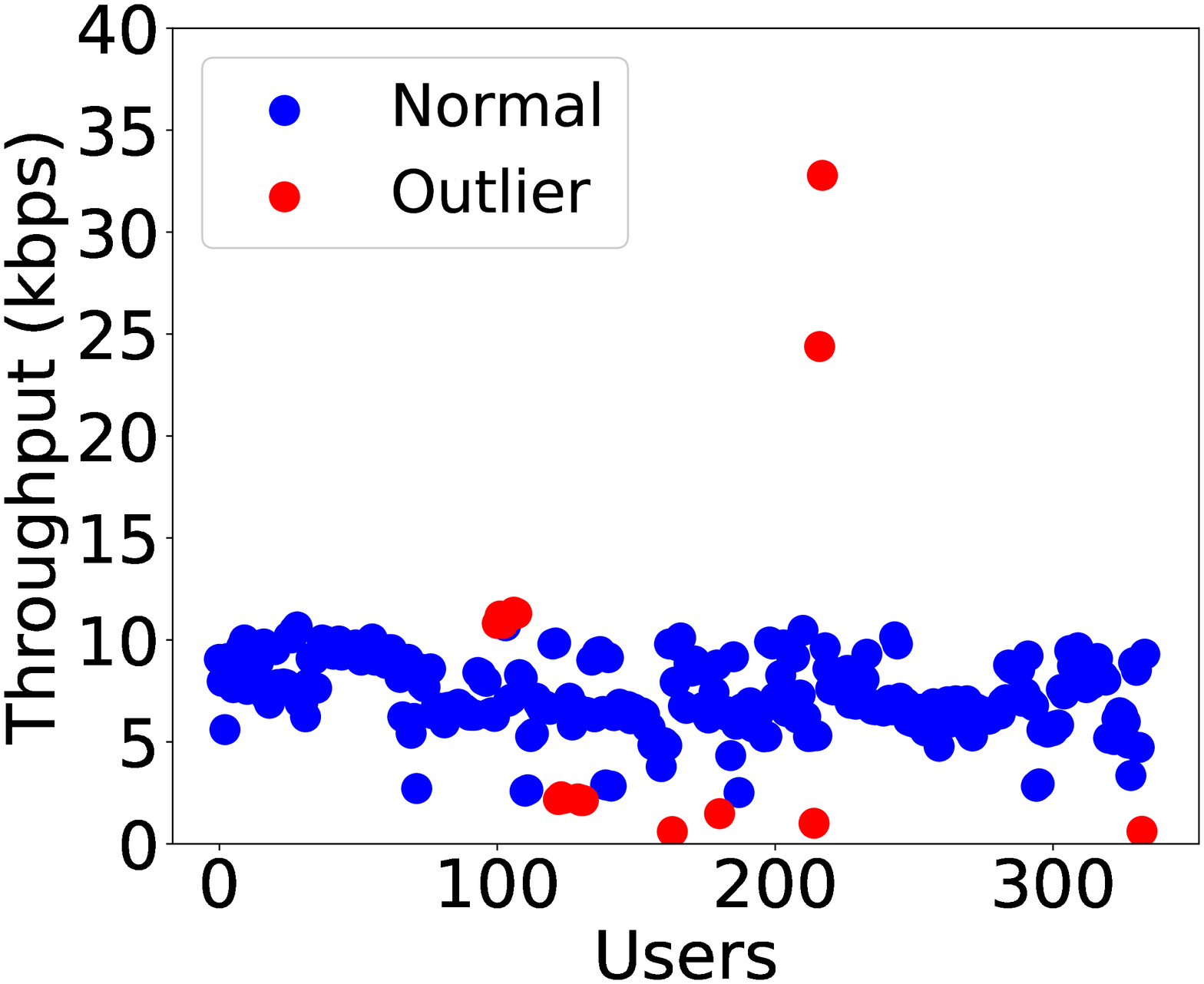}
	}
	\vspace*{-0.3cm}
	\caption{Exemplary outlier detection results by $i$Forest.}
	\label{fig:outlierexp}
	\vspace*{-0.3cm}
\end{figure}

\begin{table*}[t!]
    \small
	\caption{Performance Comparison with Different Training Ratios on Static Dataset (Best Results in Bold Numbers)}
	\vspace*{-0.25cm}
	\begin{center}
        \setlength{\tabcolsep}{1.45mm}
		\begin{tabular}{c|c|c c c c c c|c c c c c c}
			\hline
			\multicolumn{1}{c|}{\multirow{2}*{\textbf{QoS Attributes}}} & \multicolumn{1}{c|}{\multirow{2}*{\textbf{Methods}}} & \multicolumn{6}{c|}{\textbf{MAE}} & \multicolumn{6}{c}{\textbf{RMSE}} \\
			\cline{3-14}
			&  & $10\%$ & $20\%$ & $30\%$ & $70\%$ & $80\%$ & $90\%$ & $10\%$ & $20\%$ & $30\%$ & $70\%$ & $80\%$ & $90\%$ \\
			\hline \hline
			\multirow{5}{*}{Response Time} 
			& MF$_2$ & 0.5334 & 0.4103 & 0.3534 & 0.3044 & 0.2921 & 0.2848 & 0.8407 & 0.6978 & 0.6195 & 0.5742 & 0.5565 & 0.5438 \\
			& MF$_1$ & 0.4041 & 0.4037 & 0.4036 & 0.2815 & 0.2786 & 0.2798 & 0.6120 & 0.6106 & 0.6103 & 0.5590 & 0.5544 & 0.5505 \\
			& CAP & 0.3603 & 0.3521 & 0.3312 & 0.2282 & 0.2112 & 0.1881 & 0.6439 & 0.6640 & 0.6789 & 0.5815 & 0.5987 & 0.5821 \\
			& TAP & 0.3385 & 0.2843 & 0.2449 & 0.2477 & 0.2812 & 0.3189 & 0.5512 & 0.4985 & 0.4589 & 0.4687 & 0.5155 & 0.5665 \\
            & DALF & 0.3955 & 0.3439 & 0.3081 & 0.2496 & 0.2492 & 0.2397 & 0.7466 & 0.6779 & 0.5974 & 0.5471 & 0.5403 & 0.5388 \\
			& \textbf{CMF} & \textbf{0.1762} & \textbf{0.1524} & \textbf{0.1408} & \textbf{0.1153} & \textbf{0.1102} & \textbf{0.1085} & \textbf{0.3705} & \textbf{0.3599} & \textbf{0.3504} & \textbf{0.3106} & \textbf{0.2877} & \textbf{0.2699} \\
			\hline \hline
			\multirow{5}{*}{Throughput} 
			& MF$_2$ & 13.9730 & 12.3750 & 10.7753 & 7.8371 & 7.8255 & 7.8071 & 28.9608 & 26.8906 & 24.6608 & 19.6406 & 19.2831 & 18.6451 \\
			& MF$_1$ & 16.5509 & 13.1105 & 10.7200 & 7.5736 & 7.3263 & 7.1115 & 33.8889 & 27.9648 & 23.6611 & 18.1316	& 17.9458 & 17.3698 \\
			& CAP & 16.4269 & 16.3125 & 16.1946 & 9.7147 & 8.6984 & 7.8516 & 32.9558 & 32.9334 & 32.9540 & 23.7955 & 22.2425 & 21.3711 \\
			& TAP & 22.1419 & 19.8273 & 17.8388 & 14.5786 & 14.8380 & 15.4028 & 43.4987 & 40.9533 & 38.8371 & 33.3052 & 32.4076 & 32.0935 \\
            & DALF & 13.1968 & 11.9619 & 10.6882 & 7.8156 & 7.7902 & 7.7771 & 27.8531 & 26.0299 & 24.4506 & 19.3523 & 18.9886 & 18.2965 \\
			& \textbf{CMF} & \textbf{8.4573} & \textbf{7.2501} & \textbf{6.4300} & \textbf{5.1865} & \textbf{5.1241} & \textbf{5.0078} & \textbf{24.9137} & \textbf{20.8927} & \textbf{18.8985} & \textbf{17.2916}  & \textbf{17.1433}  & \textbf{16.9388} \\
			\hline
		\end{tabular}
		\label{tab1}
	\end{center}
\end{table*}

\subsection{Baseline Methods}

For ease of presentation, we name our method for static QoS prediction as \textbf{CMF} and our method for time-aware QoS prediction as \textbf{CTF} hereafter. For static QoS prediction, we compare CMF with the following five methods:
\begin{itemize}
	
	\item \textbf{MF$_2$}: MF$_2$ denotes the basic MF-based QoS prediction method \cite{zheng2013personalized} and it measures the discrepancy between the observed QoS values and the predicted ones by $L_2$-norm.
	
	\item \textbf{MF$_1$}: MF$_1$ is also an MF-based QoS prediction method \cite{zhu2018similarity}. However, it utilizes the $L_1$-norm loss to construct the objective function. MF$_1$ is expected to be more robust to outliers. Note that we implement MF$_1$ a little differently from the original one proposed in \cite{zhu2018similarity}. In our implementation, we ignore the privacy and location information. 
	
	\item \textbf{CAP}: CAP is a credibility-aware QoS prediction method \cite{wu2015qos}. It first employs a two-phase $k$-means clustering algorithm to identify untrustworthy users (i.e., outliers), and then predicts unknown QoS values based on the clustering information contributed by trustworthy users.
	
	\item \textbf{TAP}: TAP is a trust-aware QoS prediction method \cite{su2017tap}. It aims to provide reliable QoS prediction results via calculating the reputation of users by a beta reputation system, and it identifies outliers based on $k$-means clustering as well.

   \item \textbf{DALF}: DALF is a data-aware latent factor model for QoS prediction \cite{wu2019data}. It utilizes the density peaks based clustering algorithm \cite{rodriguez2014clustering} to detect unreliable QoS data directly.
	
\end{itemize}

For time-aware QoS prediction, we compare our CTF with the following five methods:
\begin{itemize}
	
	\item \textbf{NNCP}: NNCP is a tensor-based time-aware QoS prediction method \cite{zhang2014temporal}. It is based on CP decomposition and imposes nonnegative constraints on all the factor matrices.
	
	\item \textbf{BNLFT}: BNLFT is a biased nonnegative tensor factorization model \cite{luo2019temporal}. It incorporates linear biases into the model for describing QoS fluctuations, and it adds nonnegative constraints to the factor matrices as well.
	
	\item \textbf{WLRTF}: WLRTF is an MLE (maximum likelihood estimation) based tensor factorization method \cite{chen2016robust}. It models the noise of each datum  as a mixture of Gaussian (MoG). 

    \item \textbf{PLMF}: PLMF is an LSTM (long short-term memory) \cite{hochreiter1997long} based  QoS prediction method \cite{xiong2018personalized}. PLMF can capture the dynamic latent representations of users
and Web services.

    \item \textbf{TASR}: TASR is a time-aware QoS  prediction method \cite{ding2018time}. It integrates similarity-enhanced collaborative filtering model and the ARIMA model (a time series analysis model)  \cite{box1970distribution}. 
\end{itemize}

\begin{table*}[t!]
\small
	\caption{Performance Comparison with Different Outlier Ratios on Static Dataset (Best Results in Bold Numbers)}
	\vspace*{-0.25cm}
	\begin{center}
        \setlength{\tabcolsep}{1.45mm}
		\begin{tabular}{c|c|c c c c c c|c c c c c c}
			\hline
			\multicolumn{1}{c|}{\multirow{2}*{\textbf{QoS Attributes}}} & \multicolumn{1}{c|}{\multirow{2}*{\textbf{Methods}}} & \multicolumn{6}{c|}{\textbf{MAE}} & \multicolumn{6}{c}{\textbf{RMSE}} \\
			\cline{3-14}
			&  & $2\%$ & $4\%$ & $6\%$ & $8\%$ & $10\%$ & $20\%$ & $2\%$ & $4\%$ & $6\%$ & $8\%$ & $10\%$ & $20\%$ \\
			\hline \hline
			\multirow{5}{*}{Response Time}
			& MF$_2$ & 0.4080 & 0.3732 & 0.3533 & 0.3445 & 0.3306 & 0.3072 & 0.8040 & 0.7210 & 0.6711 & 0.6508 & 0.6021 & 0.5810 \\
			& MF$_1$ & 0.3761 & 0.3390 & 0.3185 & 0.2972 & 0.2702 & 0.2525 & 0.7935 & 0.6903 & 0.6398 & 0.5918 & 0.5575 & 0.3744 \\
			& CAP & 0.4163 & 0.3657 & 0.3311 & 0.2997 & 0.2739 & 0.2413 & 0.9789 & 0.8375 & 0.7616 & 0.6817 & 0.6191 & 0.5258 \\
			& TAP & 0.4562 & 0.3788 & 0.3268 & 0.2703 & 0.2183 & 0.1649 & 1.1536 & 0.9393 & 0.8148 & 0.6475 & 0.4294 & 0.2478 \\
            & DALF & 0.3622 & 0.3217 & 0.3071 & 0.2890 & 0.2781 & 0.2480 & 0.7695 & 0.6728 & 0.6346 & 0.5975 & 0.5701 & 0.5132 \\
			& \textbf{CMF} & \textbf{0.2134} & \textbf{0.1758} & \textbf{0.1545} & \textbf{0.1384} & \textbf{0.1253} & \textbf{0.1019} & \textbf{0.6582} & \textbf{0.5001} & \textbf{0.4452} & \textbf{0.3811} & \textbf{0.3195} & \textbf{0.2347} \\
			\hline \hline
			\multirow{5}{*}{Throughput} 
			& MF$_2$ & 11.8832 & 10.7024 & 9.6776 & 9.0889 & 8.6373 & 8.1358 & 32.9795 & 28.5992 & 25.3608 & 22.9710 & 21.1042 & 18.5597 \\
			& MF$_1$ & 12.3647 & 10.7403 & 9.8674 & 9.2223 & 8.7708 & 8.1667 & 32.9672 & 27.7982 & 24.4438 & 22.1691 & 20.2018 & 17.4015 \\
			& CAP & 18.2991 & 16.8273 & 15.5975 & 13.8889 & 13.6477 & 12.6762 & 45.9353 & 39.6390 & 35.5944 & 31.4784 & 29.2029 & 25.2830 \\
			& TAP & 22.0584 & 18.8479 & 16.9577 & 15.9026 & 15.1283 & 14.2151 & 58.5192 & 47.7490 & 41.6689 & 38.5700 & 35.2813 & 31.2779 \\
            & DALF & 11.8763 & 10.5724 & 9.1783 & 8.9276 & 8.6037 & 8.0449 & 32.8586 & 28.5797 & 24.8752 & 22.7428 & 20.9789 & 18.3713 \\
			& \textbf{CMF} & \textbf{8.3266} & \textbf{7.2138} & \textbf{6.5143} & \textbf{6.0463} & \textbf{5.5718} & \textbf{5.0177} & \textbf{30.5885} & \textbf{26.0933} & \textbf{22.9529} & \textbf{20.7105} & \textbf{17.8538} & \textbf{14.7925} \\
			\hline
		\end{tabular}
		\label{tab2}
	\end{center}
\end{table*}

Although MF$_1$, CAP, TAP and DALF are able to deal with outliers to some extent for static QoS prediction, to our best knowledge, our method CTF is the first to take outliers into consideration for time-aware QoS prediction. It is also worth emphasizing that our method and all baseline methods (except CAP, TAP and DALF) will not explicitly detect outliers when learning the prediction model. The reason for detecting outliers during the testing phase is to make MAE and RMSE be able to truly reflect the QoS prediction performance. For all methods, outliers will be removed when calculating MAE and RMSE. In addition, in the experiments, we run each method 10 times and report the average results for fair comparison.

\subsection{Experiments for Static QoS Prediction}

\subsubsection{Parameter Settings}

In the experiments, for all baseline methods, we tune the corresponding parameters following the guidance of the original papers. As for our method CMF, on the response time dataset, the parameters are set as $l=30$, $\gamma = 1$, $\lambda_u = \lambda_s = 1$, and $\eta_u = \eta_s = 0.003$. On the throughput dataset, the parameters are set as $l=30$, $\gamma = 20$, $\lambda_u = \lambda_s = 0.01$, and $\eta_u = \eta_s = 0.025$. For MF$_2$, MF$_1$ and DALF, the feature dimensionality is also set to 30.

\subsubsection{Experimental Results}

We first report the results by varying the training ratios in the range of $\{0.1, 0.2, 0.3, 0.7, 0.8, 0.9\}$. This is to simulate various prediction scenarios with different data sparsity. For example, when the training ratio is set to 0.1, then $10\%$ of the dataset will be used as training data and the rest will be used as testing data. As aforementioned, during the testing phase, outliers should be eliminated explicitly. Here we set the outlier ratio to 0.1, which means $10\%$ of the testing data with large outlier scores will be removed when calculating MAE and RMSE. The detailed comparison results are presented in Table~\ref{tab1}. As can be seen, our method CMF consistently shows better performance than all baseline methods. Moreover, the MAE and RMSE values obtained by our method are much smaller than those of baseline methods, especially on the response time dataset. For instance, CMF achieves more than $30\%$ performance promotion on response time over both MAE and RMSE. From Table~\ref{tab1}, we can also see that MF$_1$, MF$_2$, DALF and CMF tend to obtain smaller MAE and RMSE values as the training ratio increases. This is desired because a larger training ratio indicates that more QoS observations (i.e., more information) will be used to train the prediction model. However, CAP and TAP do not show this pattern, especially on the response time dataset. We can also observe that although CAP, TAP and DALF explicitly take outliers into consideration during the training phase, their performance is not satisfactory. The reason may be the misclassification of outliers. Since our method does not detect outliers directly during the training phase, it will not suffer from the misclassification issue. The resilience of our method to outliers makes it more robust.

\begin{figure*}[t!]
	\centering
	\subfigure[{MAE}]{
		\includegraphics[width=0.493\columnwidth]{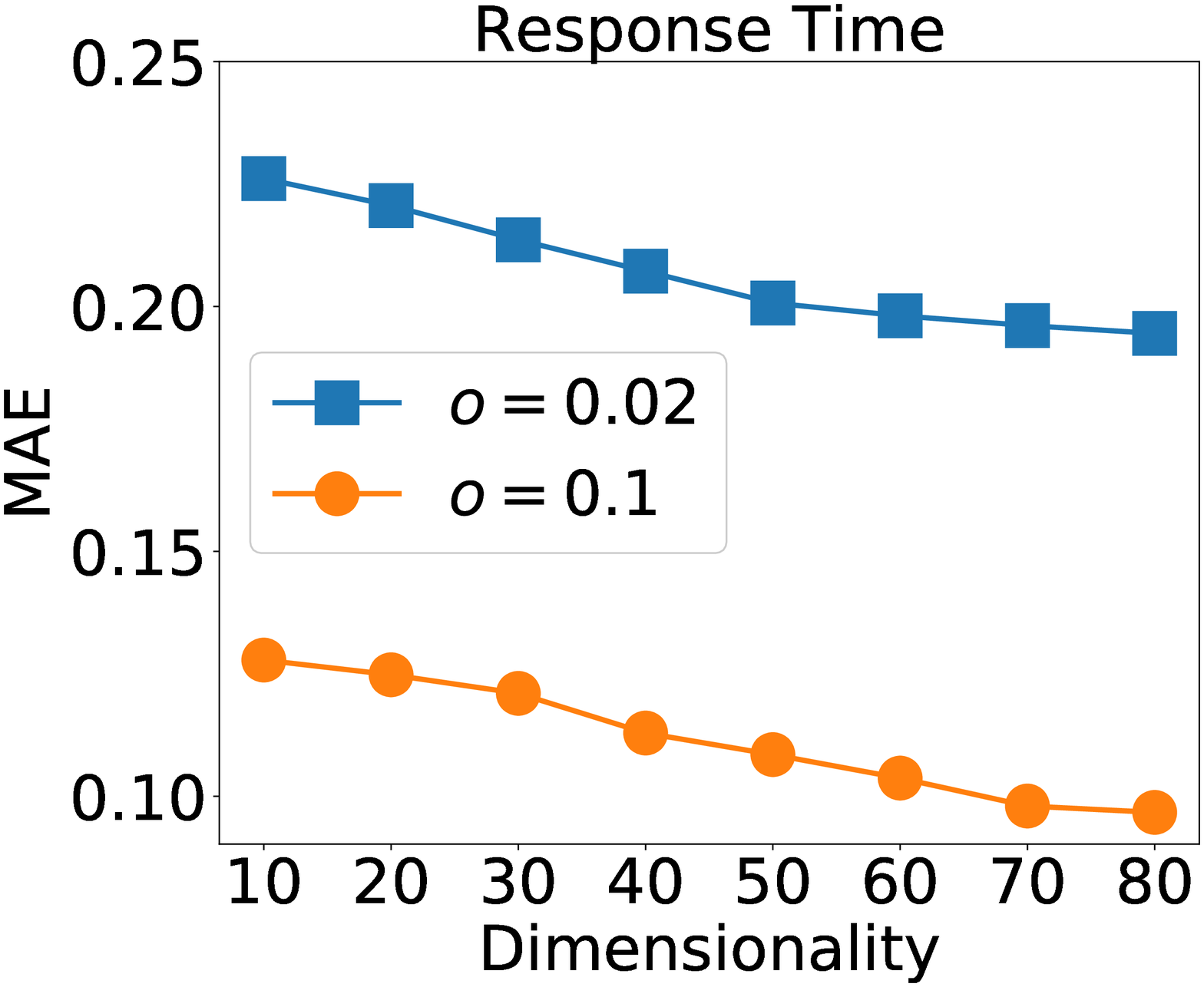}
	}
    \subfigure[{RMSE}]{
		\includegraphics[width=0.478\columnwidth]{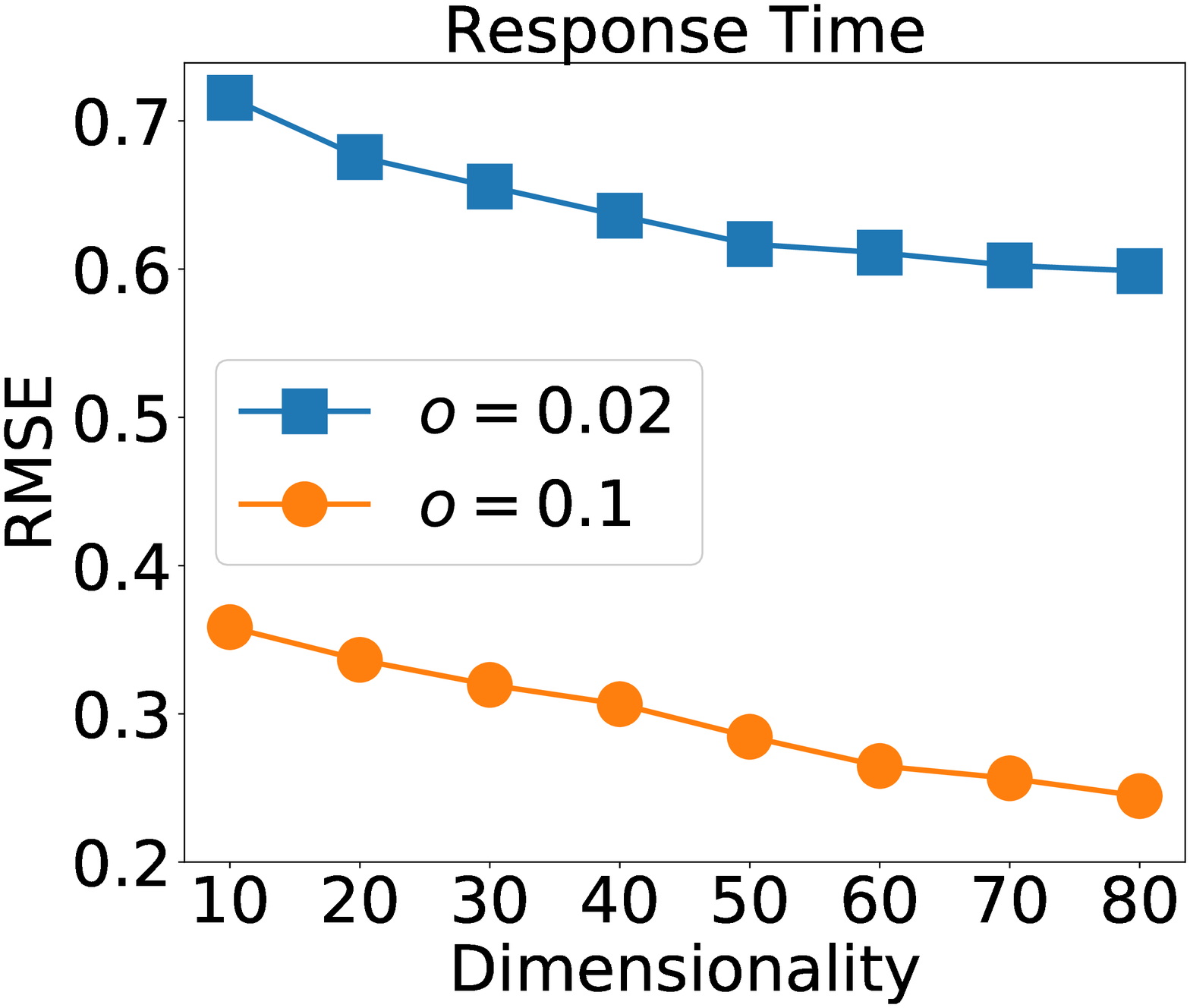}
	}
	\subfigure[{MAE}]{
		\includegraphics[width=0.454\columnwidth]{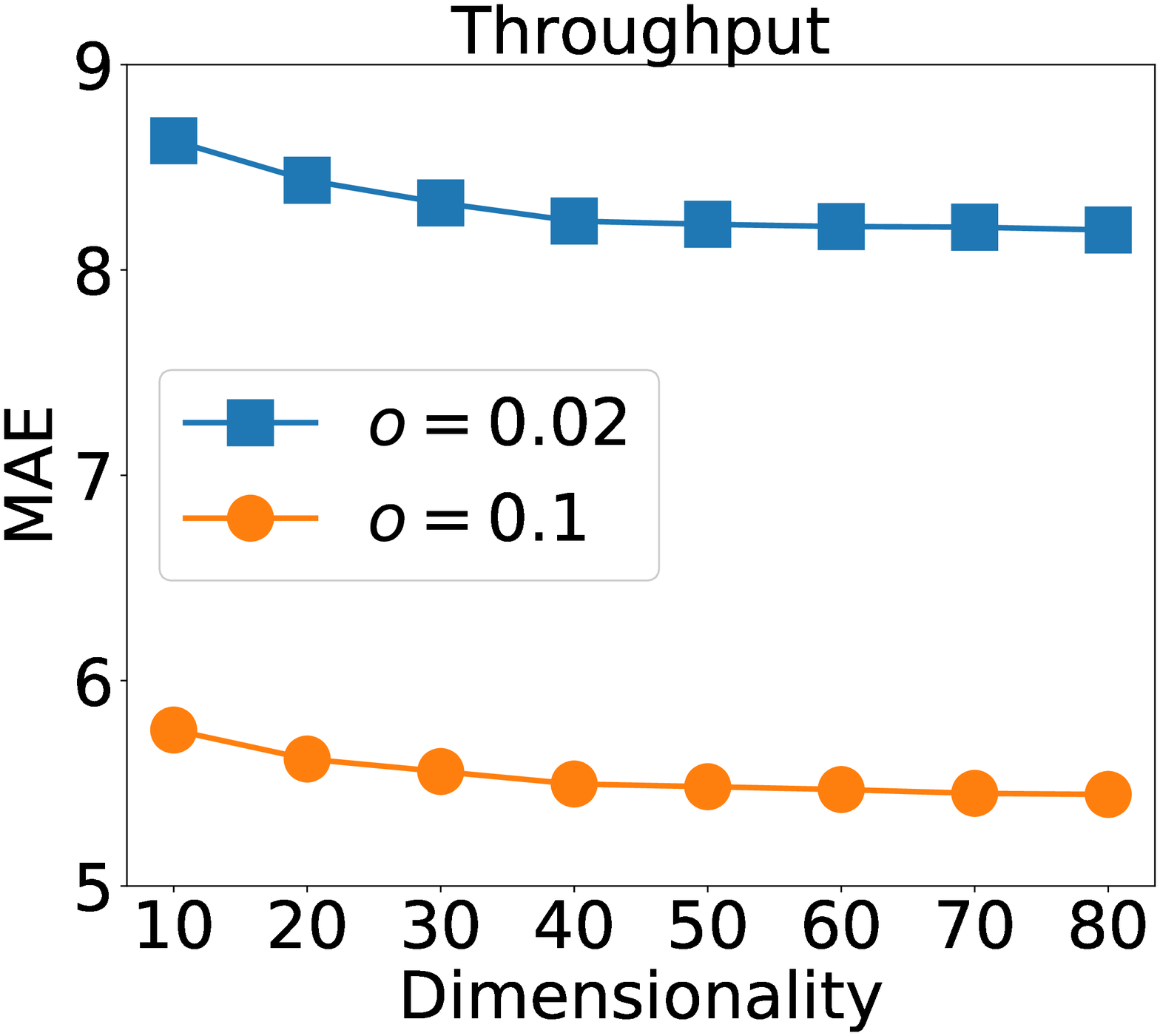}
	}
    \subfigure[{RMSE}]{
		\includegraphics[width=0.47\columnwidth]{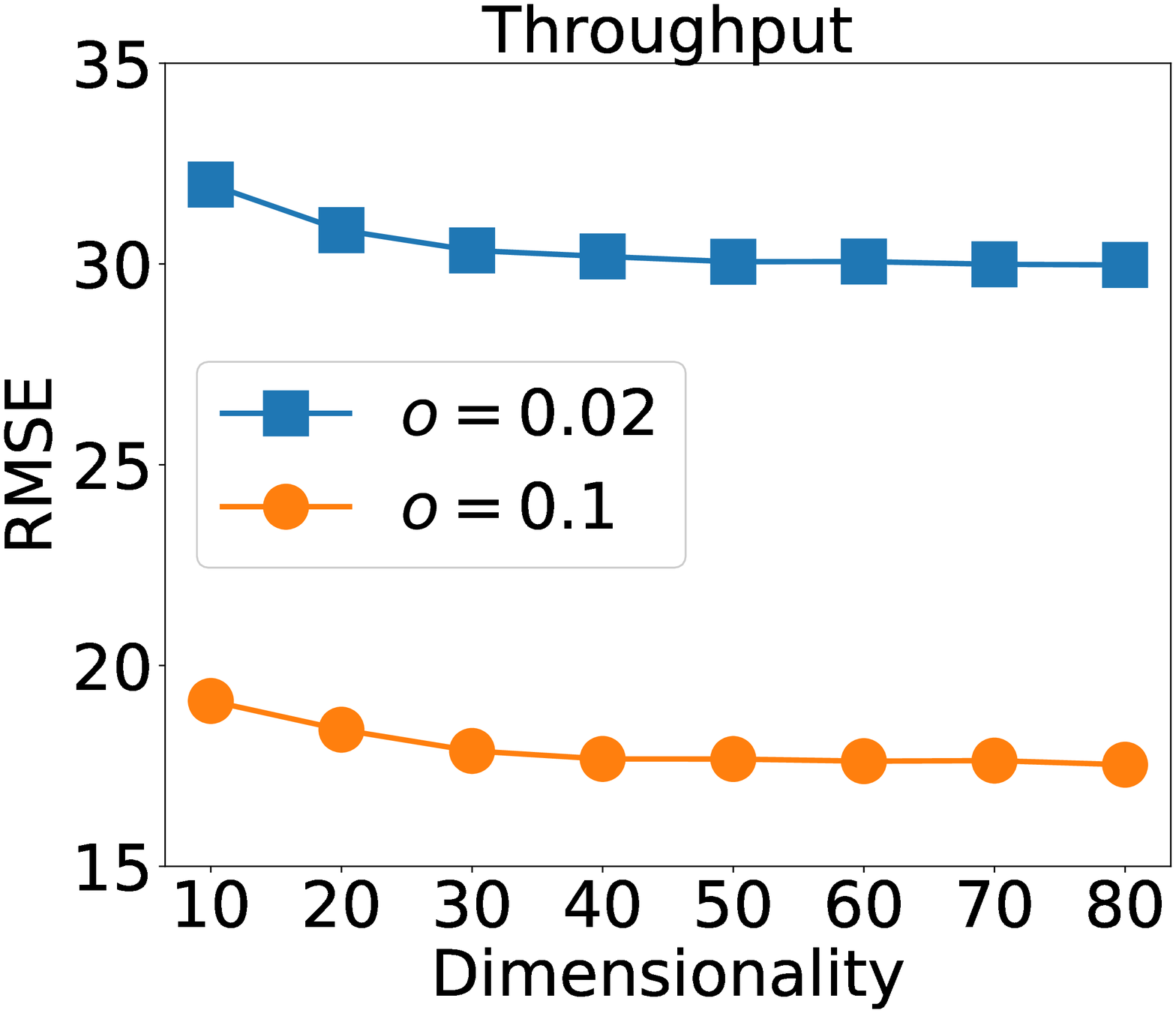}
	}
	\vspace*{-0.35cm}
	\caption{Impact of dimensionality $l$ on CMF (with outlier ratio set to 0.02 and 0.1).}
	\label{fig:dl}
\end{figure*}

\begin{figure*}[t!]
	\centering
	\subfigure[{MAE}]{
		\includegraphics[width=0.478\columnwidth]{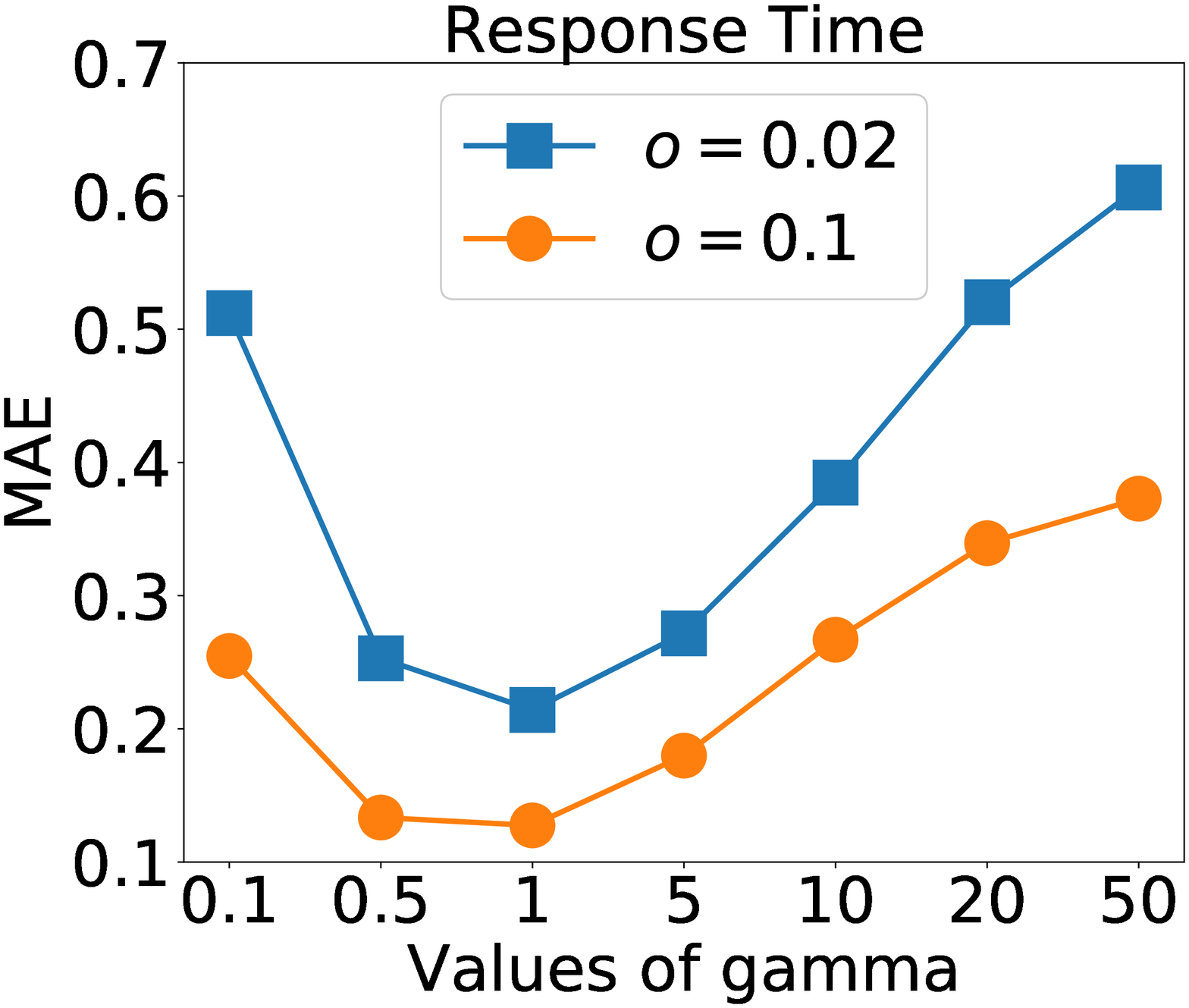}
	}
    \subfigure[{RMSE}]{
		\includegraphics[width=0.478\columnwidth]{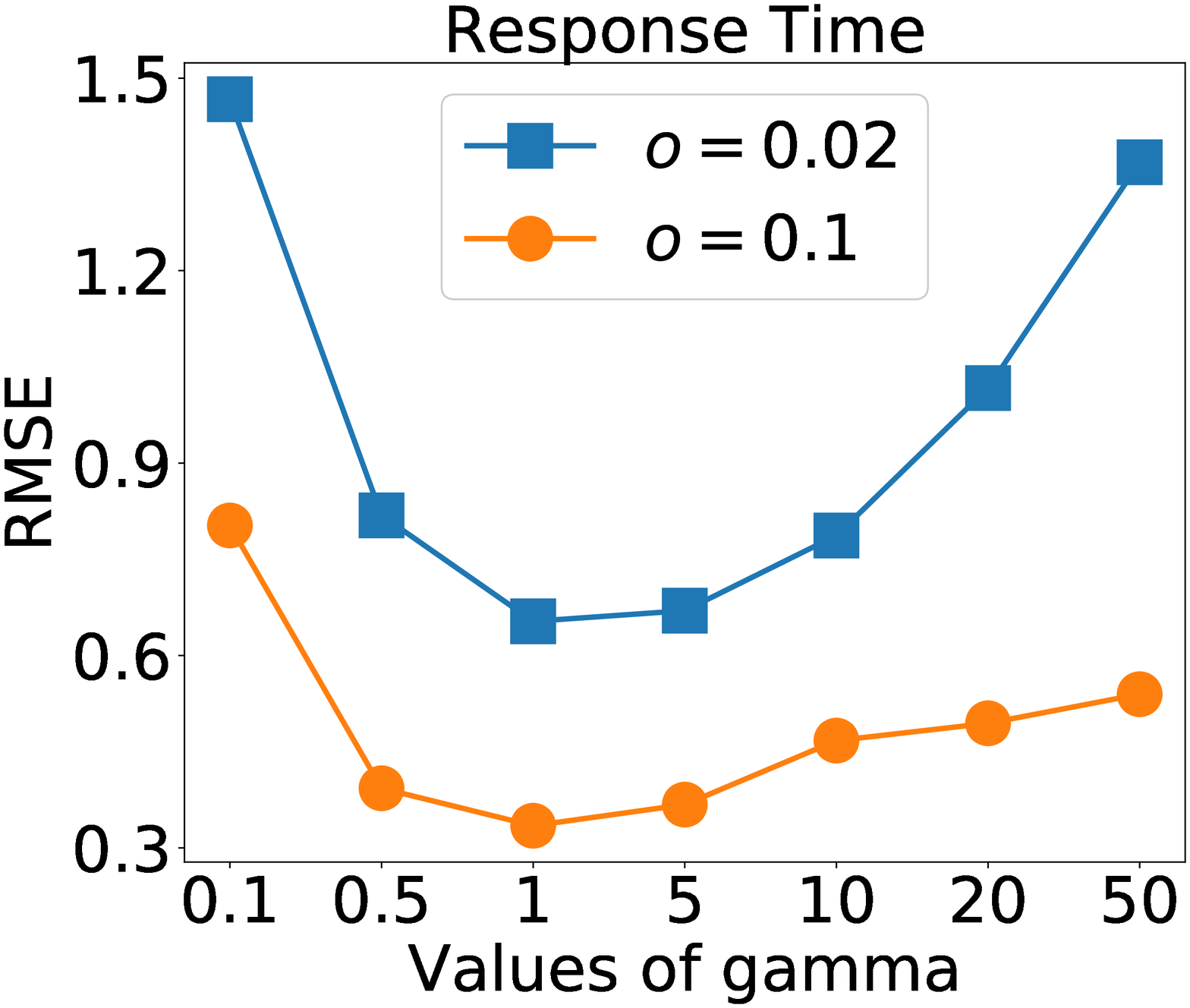}
	}
	\subfigure[{MAE}]{
		\includegraphics[width=0.47\columnwidth]{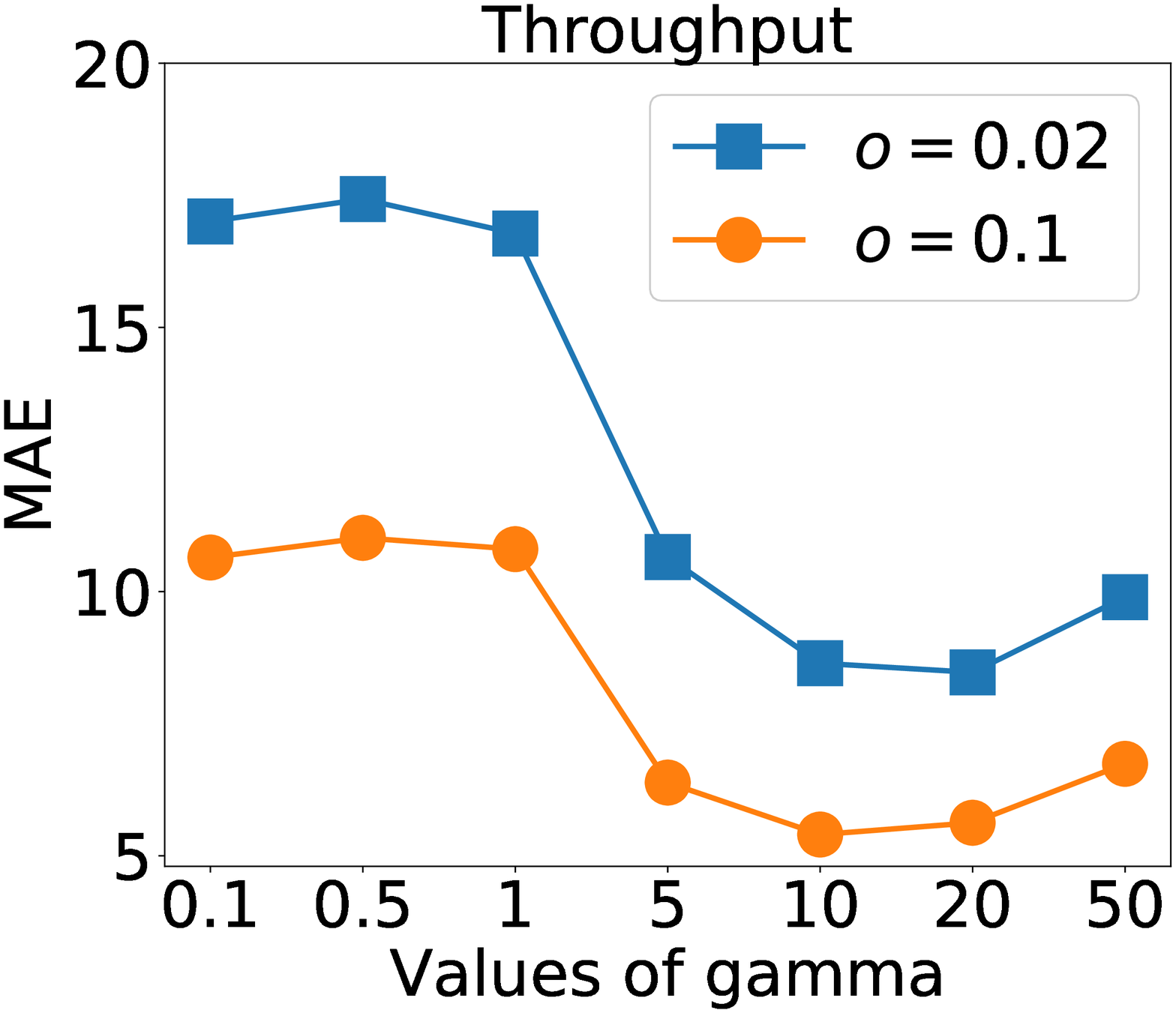}
	}
    \subfigure[{RMSE}]{
		\includegraphics[width=0.47\columnwidth]{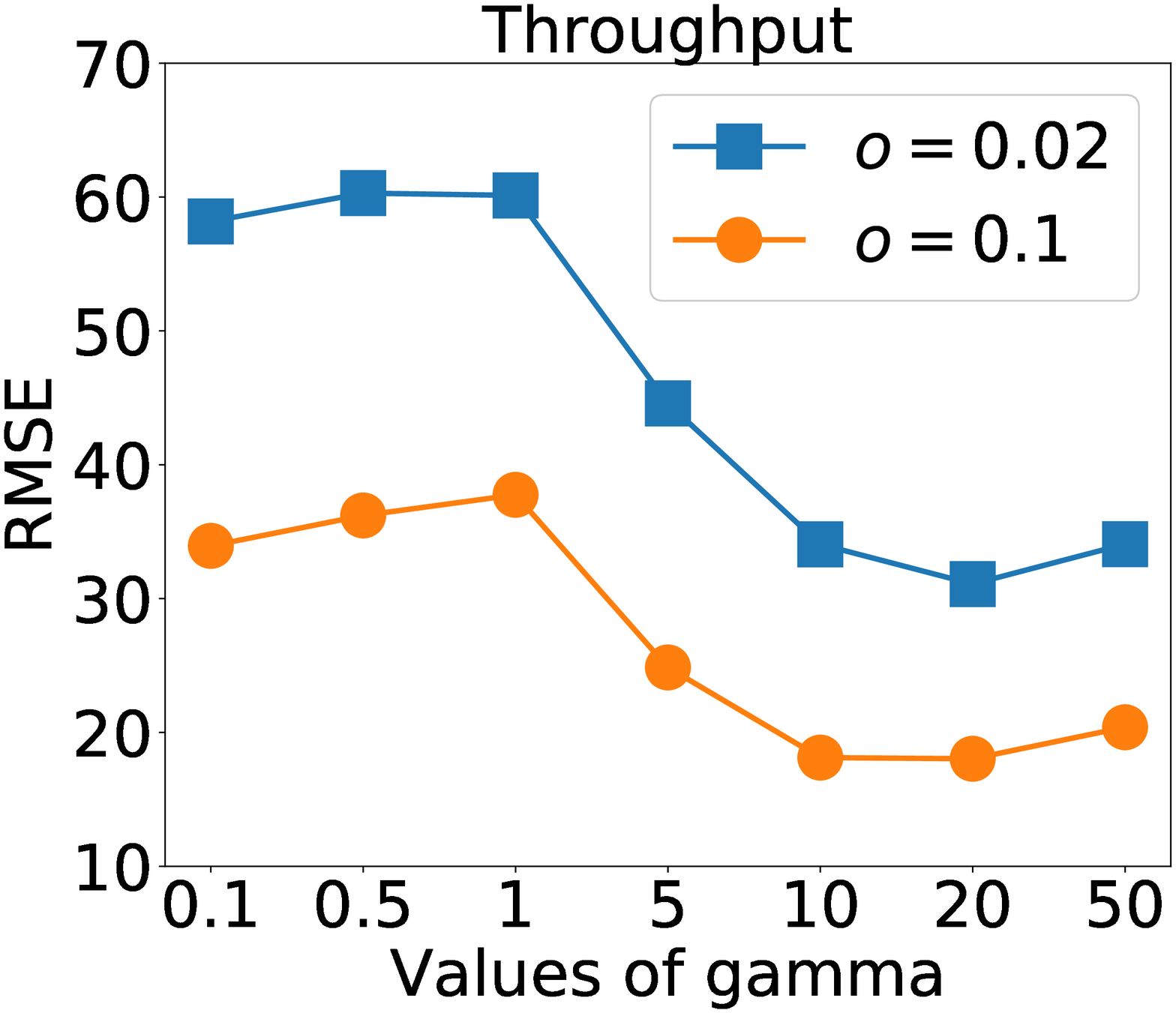}
	}
	\vspace*{-0.35cm}
	\caption{Impact of parameter $\gamma$ on CMF (with outlier ratio set to 0.02 and 0.1).}
	\label{fig:gamma}
\end{figure*}

\begin{figure*}[t!]
	\centering
	\subfigure[{MAE}]{
		\includegraphics[width=0.493\columnwidth]{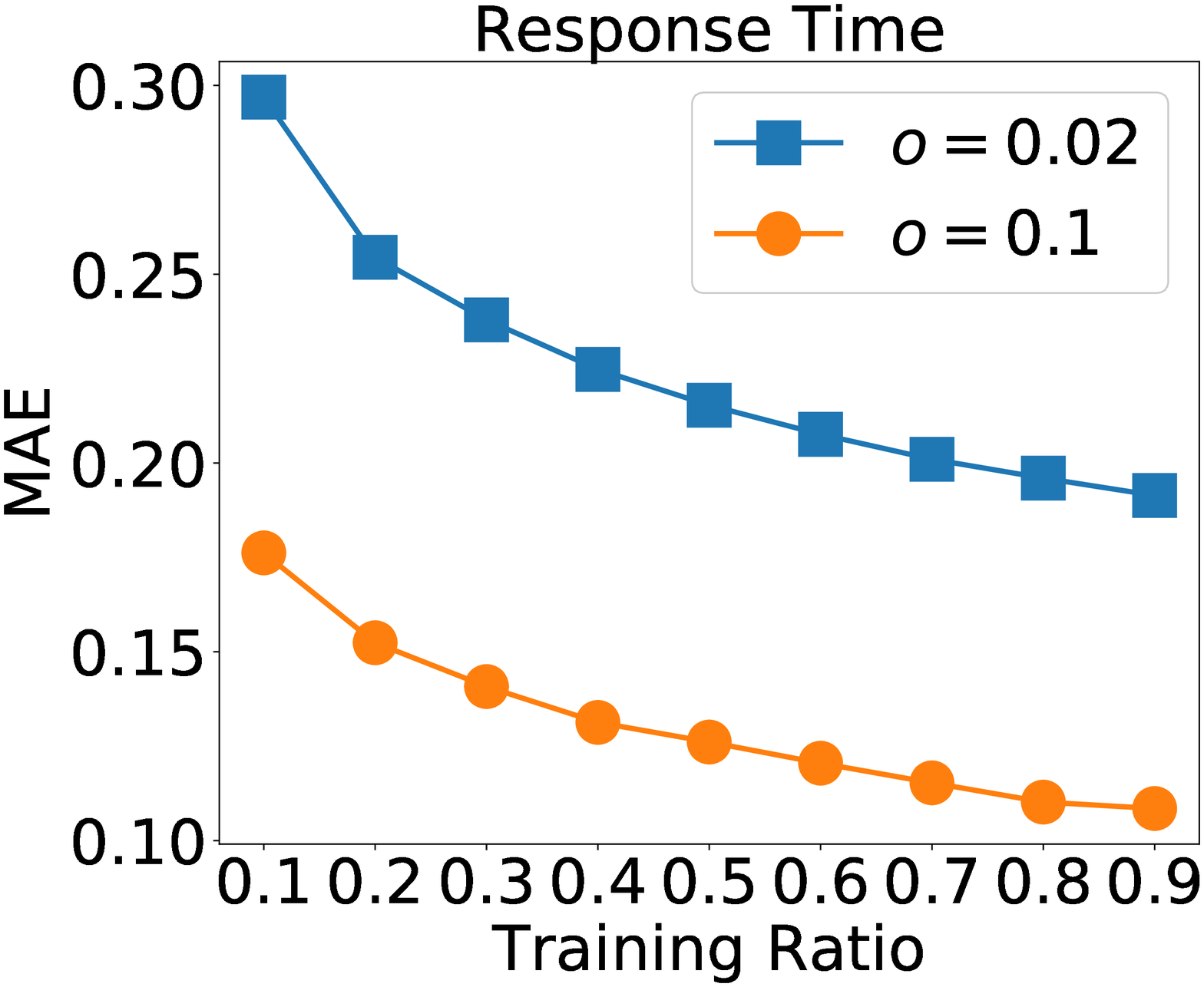}
	}
    \subfigure[{RMSE}]{
		\includegraphics[width=0.478\columnwidth]{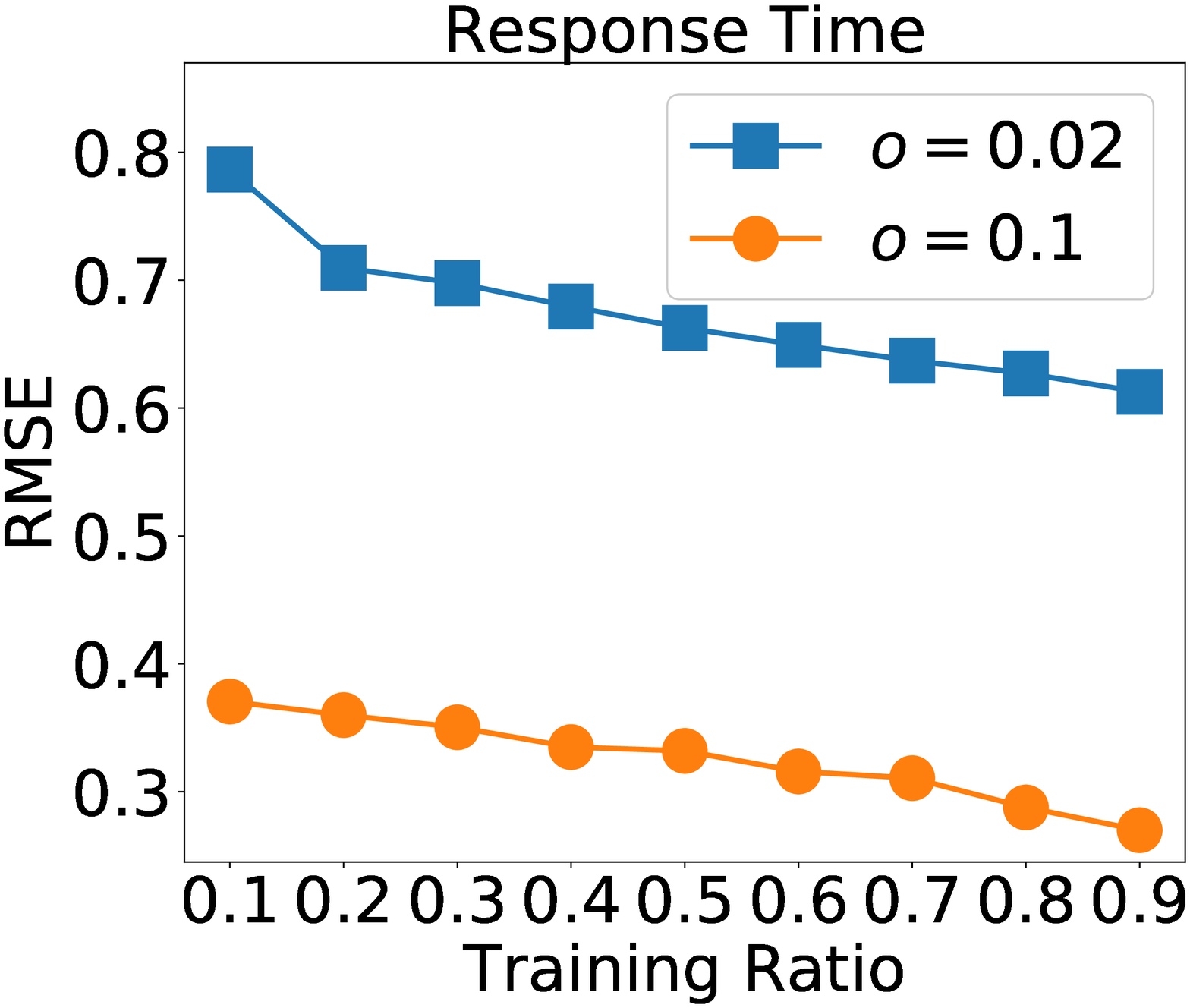}
	}
    \subfigure[{MAE}]{
		\includegraphics[width=0.4691\columnwidth]{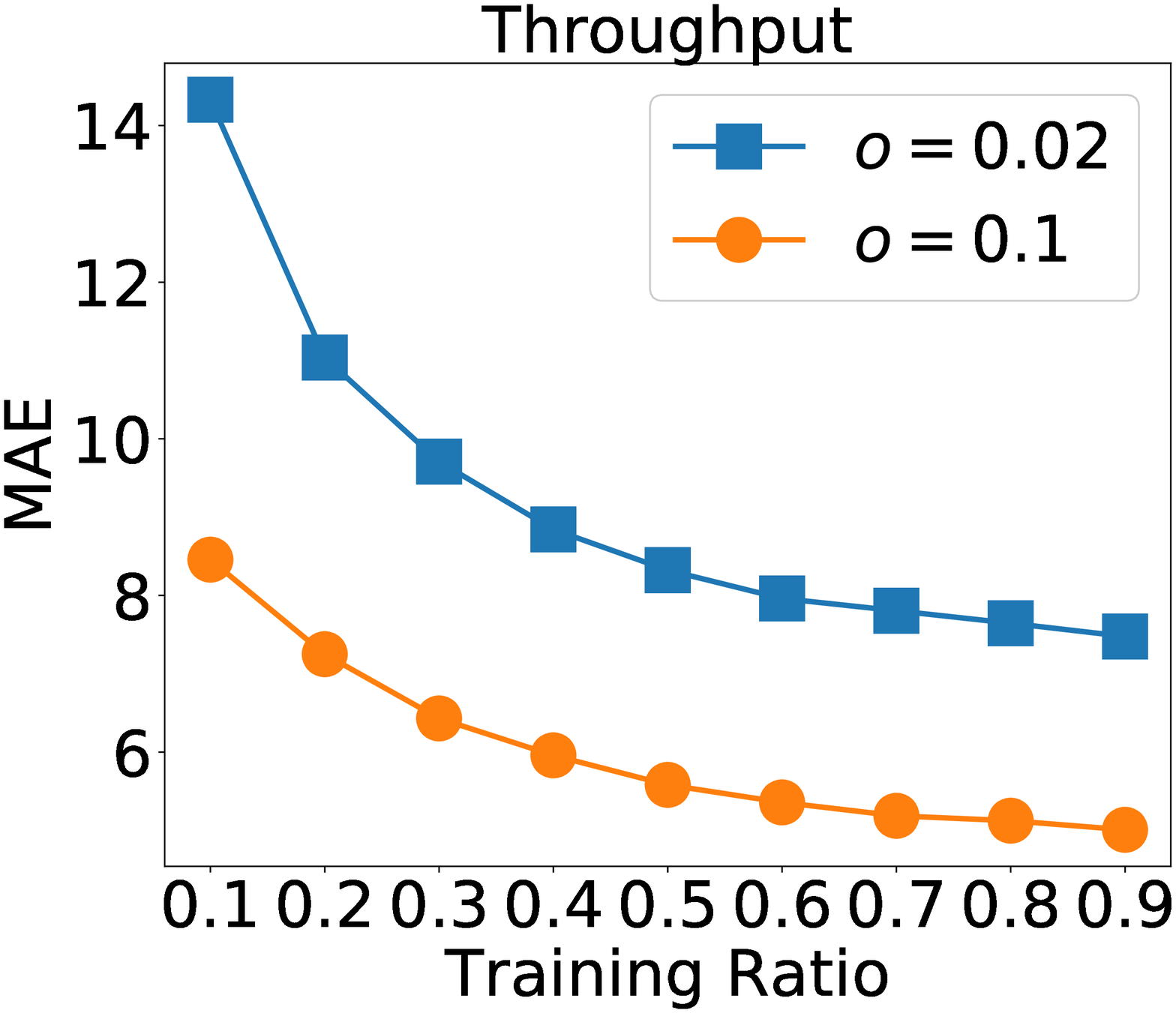}
	}
   \subfigure[{RMSE}]{
		\includegraphics[width=0.4691\columnwidth]{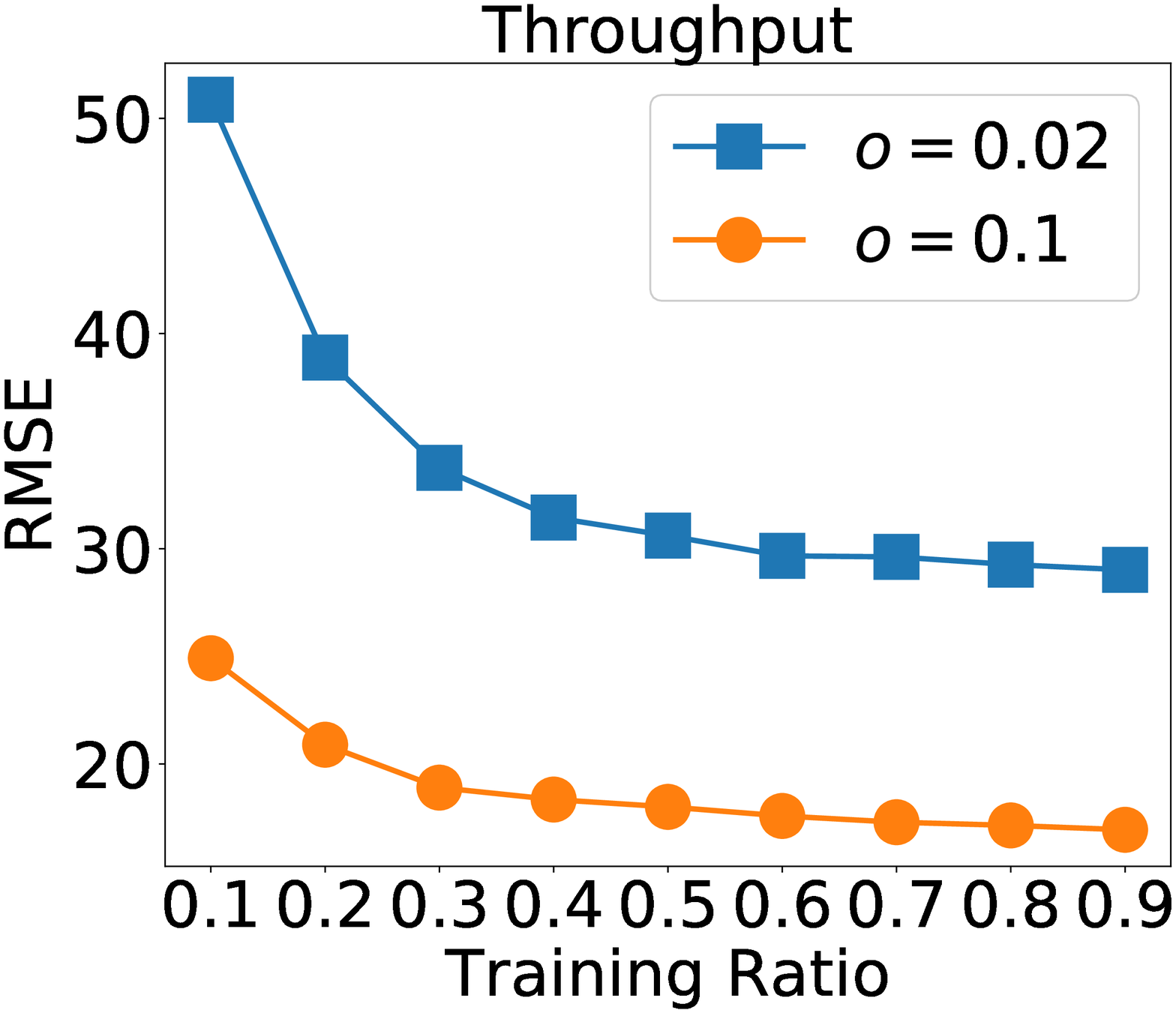}
	}
	\vspace*{-0.1cm}
	\caption{Impact of data sparsity on CMF (with outlier ratio set to 0.02 and 0.1).}
	\label{fig:ratio}
\end{figure*}

We then report the results by varying the outlier ratios in the range of $\{0.02, 0.04, 0.06, 0.08, 0.1, 0.2\}$. In this experiment, the training ratio is fixed at 0.5. The results are shown in Table~\ref{tab2}. From Table~\ref{tab2}, we can see that our method still shows the best performance under different outlier ratios. It can also be observed that the MAE and RMSE values of all methods become smaller as the outlier ratio increases. This is reasonable because 
the larger the outlier ratio is, the more testing data with large outlier scores will be removed. Thus, the effects of outliers on the calculation of MAE and RMSE will be reduced accordingly. From Table~\ref{tab2}, we can further obtain that with the increasing of outlier ratios, the performance promotion of CMF relative to MF$_2$ increases from $48\%$ to $67\%$ over MAE and from $18\%$ to $60\%$ over RMSE on the response time dataset. On the throughput dataset, the performance promotion also increases from $30\%$ to $38\%$ over MAE and from $7\%$ to $20\%$ over RMSE. The increase of performance promotion verifies the necessity of removing outliers during the testing phase. It also verifies the robustness of our proposed method again.

\begin{table*}[t!]
    \small
	\caption{Performance Comparison with Different Training Ratios on Dynamic Dataset (Best Results in Bold Numbers)}
	\vspace*{-0.3cm}
	\begin{center}
        \setlength{\tabcolsep}{1.55mm}
		\begin{tabular}{c|c|c c c c c c|c c c c c c}
			\hline
			\multicolumn{1}{c|}{\multirow{2}*{\textbf{QoS Attributes}}} & \multicolumn{1}{c|}{\multirow{2}*{\textbf{Methods}}} & \multicolumn{6}{c|}{\textbf{MAE}} & \multicolumn{6}{c}{\textbf{RMSE}} \\
			\cline{3-14}
			&  & $10\%$ & $20\%$ & $30\%$ & $70\%$ & $80\%$ & $90\%$ & $10\%$ & $20\%$ & $30\%$ & $70\%$ & $80\%$ & $90\%$ \\
			\hline \hline
			\multirow{5}{*}{Response Time} 
			& NNCP & 1.0796 & 1.0536 & 1.0550 & 1.0424 & 1.0406 & 1.0392 & 2.6401 & 2.5797 & 2.5809 & 2.5574 & 2.5668 & 2.5616 \\
			& BNLFT & 1.0828 & 1.0575 & 1.0467 & 1.0368 & 1.0556 & 1.0403 & 2.6181 & 2.5809 & 2.5682 & 2.5559 & 2.5731 & 2.5582 \\
			&WLRTF & 1.0560 & 1.0437 & 1.0288 & 1.0299 & 1.0218 & 1.0274 & 2.6009 & 2.5706 & 2.5642 & 2.5566 & 2.5571 & 2.5491 \\
            & PLMF & 2.6133 & 2.5932 & 2.4054 & 2.2097 & 2.1266 & 2.0247 & 4.5582 & 4.3536 & 4.3294 & 4.1843 & 3.9818 & 3.8542 \\
            & TASR & 2.8188 & 2.7120 & 2.5591 & 2.1184 & 2.0066 & 1.8854 & 6.3872 & 6.1807 & 5.9552 & 5.0212 & 4.8000 & 4.5447 \\
			& \textbf{CTF} & \textbf{0.9215} & \textbf{0.8981} & \textbf{0.8890} & \textbf{0.8860} & \textbf{0.8766} & \textbf{0.8750} & \textbf{2.5865} & \textbf{2.5579} & \textbf{2.5548} & \textbf{2.5529} & \textbf{2.5517} & \textbf{2.5401} \\
			\hline \hline
			\multirow{5}{*}{Throughput} 
			& NNCP & 1.5079 & 1.4342 & 1.4287 & 1.3761 & 1.3708 & 1.3761 & 4.9207 & 4.7019 & 4.6404 & 4.5080 & 4.4484 & 4.4968 \\
			& BNLFT & 1.4241 & 1.3935 & 1.3791 & 1.3856 & 1.3695 & 1.3613 & 4.6031 & 4.4685 & 4.4537 & 4.4128 & 4.3595 & 4.3493 \\
			& WLRTF & 2.9576 & 2.9568 & 2.9564 & 2.9561 & 2.9562 & 2.9537 & 4.9161 & 4.9160 & 4.9165 & 4.9153 & 4.9159 & 4.9095 \\
            & PLMF & 2.4712 & 2.2602 & 2.4459 & 2.3328 & 2.4655 & 2.2329 & 3.6705 & 3.8363 & 3.8455 & 3.8209 & 3.7541 & 3.5119 \\
            & TASR & 4.3265 & 3.6419 & 3.4736 & 2.8803 & 2.8258 & 2.7417 & 5.9152 & 5.1844 & 5.0034 & 4.3744 & 4.3142 & 4.2709 \\
			& \textbf{CTF} & \textbf{1.3567} & \textbf{1.1945} & \textbf{1.1225} & \textbf{0.9907} & \textbf{0.9889} & \textbf{0.9782} & \textbf{3.0436} & \textbf{2.9225} & \textbf{2.8576} & \textbf{2.7732}  & \textbf{2.6978}  & \textbf{2.6178} \\
			\hline
		\end{tabular}
		\label{tab12}
	\end{center}
\end{table*}

\begin{table*}[t!]
\small
	\caption{Performance Comparison with Different Outlier Ratios on Dynamic Dataset (Best Results in Bold Numbers)}
	\vspace*{-0.3cm}
	\begin{center}
        \setlength{\tabcolsep}{1.55mm}
		\begin{tabular}{c|c|c c c c c c|c c c c c c}
			\hline
			\multicolumn{1}{c|}{\multirow{2}*{\textbf{QoS Attributes}}} & \multicolumn{1}{c|}{\multirow{2}*{\textbf{Methods}}} & \multicolumn{6}{c|}{\textbf{MAE}} & \multicolumn{6}{c}{\textbf{RMSE}} \\
			\cline{3-14}
			&  & $2\%$ & $4\%$ & $6\%$ & $8\%$ & $10\%$ & $20\%$ & $2\%$ & $4\%$ & $6\%$ & $8\%$ & $10\%$ & $20\%$ \\
			\hline \hline
			\multirow{5}{*}{Response Time}
			& NNCP & 1.1846 & 1.1451 & 1.1069 & 1.0692 & 1.0521 & 1.0204 & 2.6740 & 2.6393 & 2.6023 & 2.5826 & 2.5805 & 2.5799 \\
			& BNLFT & 1.1647 & 1.1253 & 1.0871 & 1.0654 & 1.0475 & 0.9936 & 2.6499 & 2.6149 & 2.5771 & 2.5687 & 2.5659 & 2.5646 \\ 
			& WLRTF & 1.1436 & 1.1036 & 1.0562 & 1.0435 & 1.0261 & 0.9758 & 2.6438 & 2.6079 & 2.5696 & 2.5609 & 2.5584 & 2.5581 \\
            & PLMF & 2.6379 & 2.6011 & 2.5798 & 2.4241 & 2.3315 & 2.3162 & 5.2828 & 5.0571 & 4.7704 & 4.5917 & 4.2765 & 4.0729 \\
            & TASR & 2.5125 & 2.4326 & 2.3589 & 2.3363 & 2.3292 & 2.3019 & 5.4851 & 5.4510 & 5.4229 & 5.4018 & 5.3942 & 5.3877 \\
			& \textbf{CTF} & \textbf{1.0292} & \textbf{0.9813} & \textbf{0.9357} & \textbf{0.9105} & \textbf{0.8879} & \textbf{0.8448} & \textbf{2.6369} & \textbf{2.6015} & \textbf{2.5627} & \textbf{2.5564} & \textbf{2.5541} & \textbf{2.5503} \\
			\hline \hline
			\multirow{5}{*}{Throughput} 
			& NNCP & 2.4853 & 2.0339 & 1.7508 & 1.5419 & 1.3926 & 1.0231 & 9.8925 & 7.6471 & 6.2757 & 5.2096 & 4.5026 & 2.8916 \\
			& BNLFT & 2.4335 & 1.9909 & 1.7163 & 1.5137 & 1.3693 & 1.0117 & 9.7267 & 7.4979 & 6.1664 & 5.1236 & 4.4319 & 2.8376 \\ 
			& WLRTF & 6.4309 & 4.8846 & 3.9224 & 3.3382 & 2.9562 & 2.0911 & 17.9461 & 11.6611 & 7.9882 & 5.9178 & 4.9156 & 3.1509 \\
            & PLMF & 5.3105 & 4.1556 & 3.0467 & 2.9632 & 2.3924 & 2.1807 & 13.7985 & 8.7995 & 6.6349 & 4.9651 & 3.8347 & 3.7906 \\
            & TASR & 5.7661 & 4.5595 & 3.8264 & 3.3965 & 3.1322 & 2.6317 & 14.8241 & 9.7143 & 6.8450 & 5.2958 & 4.6089 & 3.5886 \\
			& \textbf{CTF} & \textbf{2.2624} & \textbf{1.6385} & \textbf{1.3421} & \textbf{1.1437} & \textbf{1.0193} & \textbf{0.7323} & \textbf{9.5370} & \textbf{6.0759} & \textbf{4.3650} & \textbf{3.2415} & \textbf{2.7989} & \textbf{1.8020} \\
			\hline
		\end{tabular}
		\label{tab22}
	\end{center}
\end{table*}

\subsubsection{Impact of Dimensionality}

The parameter dimensionality $l$ controls the dimension of latent features in the factor matrices. To study the impact of $l$, we vary its value from 10 to 80 with a step size of 10. In this experiment, the training ratio is fixed at 0.5 and the outlier ratio (denoted as $o$) is set to 0.02 and 0.1. The results are illustrated in Figure~\ref{fig:dl}. As we can see, both MAE and RMSE take smaller values when dimensionality $l$ grows. This is because when $l$ takes larger values, more features of users and Web services will be captured, thus resulting in more accurate prediction results. We also observe on the throughput dataset that the performance tends to be stable when $l \geq 40$, which indicates that $l=40$ is sufficient for the factor matrices to approximate the original matrix well.

\subsubsection{Impact of Parameter $\gamma$}

Recall that $\gamma$ denotes the constant in the Cauchy loss. Here we study its impact on the performance of our method by varying its value in the range of $\{0.1, 0.5, 1, 5, 10, 20, 50\}$. In this experiment, the training ratio is fixed at 0.5, and the outlier ratio $o$ is set to 0.02 and 0.1 as well. The results are illustrated in Figure~\ref{fig:gamma}. As can be seen, our method is sensitive to $\gamma$. This is due to that $\gamma$ implicitly determines which data will be treated as outliers during the training phase. Thus we need to choose a proper $\gamma$ to achieve the best performance. As shown in Figure~\ref{fig:gamma}, $\gamma$ should take value around 1 on the response time dataset and around 20 on the throughput dataset to obtain accurate prediction results.

\subsubsection{Impact of Data Sparsity}

To evaluate the performance of our method comprehensively, it is also necessary to investigate the impact of the sparsity of training data. To this end, we vary the training ratio from 0.1 to 0.9 with a step size of 0.1. Apparently, different training ratio implies different data sparsity. In this experiment, we also set the outlier ratio $o$ to 0.02 and 0.1. The results are reported in Figure~\ref{fig:ratio}. From  Figure~\ref{fig:ratio}, we see that as the training ratio increases (i.e., the sparsity of data decreases), more accurate results are obtained. 


\subsection{Experiments for Time-Aware QoS Prediction}


\subsubsection{Parameter Settings}

In the experiments, we tune the parameters of all baseline methods following the guidance of the original papers. As for our method CTF, on the response time dataset, the parameters are set as $l=15$ and $\lambda_u = \lambda_s = \lambda_t = 0.1$. $\gamma$ is set to 10 when calculating MAE and 35 when calculating RMSE.  On the throughput dataset, the parameters are set as $l=15$ and $\lambda_u = \lambda_s = \lambda_t = 100$. $\gamma$ is fixed at 5 for both MAE and RMSE. As for NNCP, BNLFT, WLRTF and PLMF, the feature dimensionality is also set to 15.

\subsubsection{Experimental Results}

We first report the results by varying the training ratios in the range of $\{0.1, 0.2, 0.3, 0.7, 0.8, 0.9\}$ and fixing the outlier ratio at 0.1. The results are presented in Table~\ref{tab12}. From Table~\ref{tab12}, we can see that our method consistently shows better performance than all baseline methods on both datasets. The results verify the robustness of our method in the time-aware extension.

We then report the results by varying the outlier ratios in the range of $\{0.02, 0.04, 0.06, 0.08, 0.1, 0.2\}$ and fixing the training ratio at 0.5. The results are shown in Table ~\ref{tab22}, from which we observe that our method achieves better performance under different outlier ratios, which is similar to the results on the static dataset.

\subsection{Efficiency Analysis}
Here, we further investigate the runtime efficiency of our method. In this experiment, we fix the training ratio at 0.5 and the outlier ratio at 0.1. The runtime of different methods on the response time dataset is reported in Figure~\ref{fig:time}. On the static dataset, we can observe that CMF is very efficient. Its runtime is comparable to that of MF$_2$ and MF$_1$. It also runs much faster than CAP, TAP and DALF. On the dynamic dataset, although CTF runs slower than PLMF and TASR, it is faster than BNLFT and WLRTF and is comparable to NNCP.

\begin{figure}[t!]
	\centering
	\subfigure[{Static}]{
		\includegraphics[width=0.4815\columnwidth]{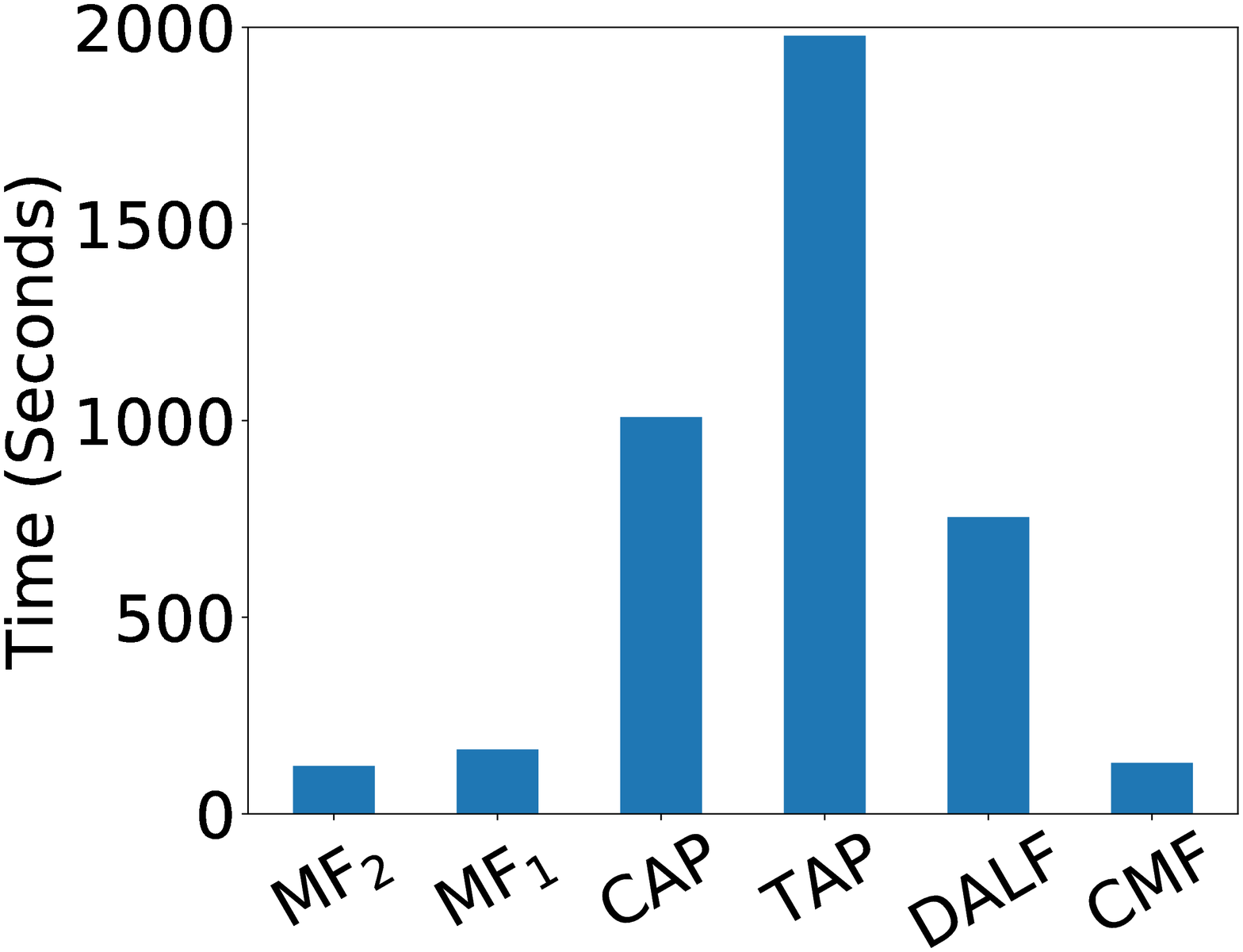}
	}
	\subfigure[{Dynamic}]{
		\includegraphics[width=0.470\columnwidth]{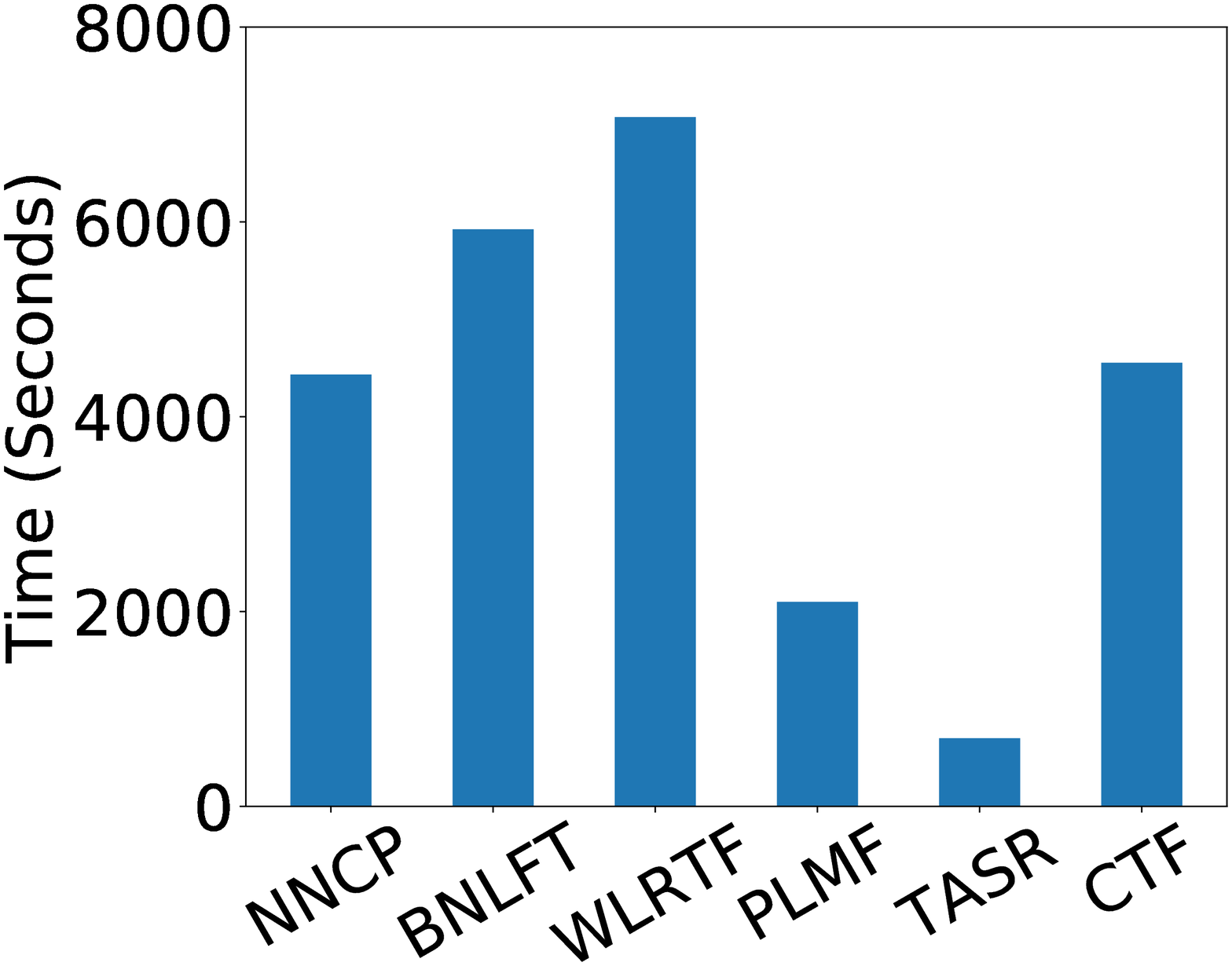}
	}
	\vspace*{-0.4cm}
	\caption{Runtime comparison on the response time dataset.}
	\label{fig:time}
	\vspace*{-0.3cm}
\end{figure}

\section{Related Work} \label{sec:rw}




\subsection{Collaborative QoS Prediction}


Most existing QoS prediction methods fall into collaborative filtering methods \cite{wu2015qos}, which can be further divided into two categories: memory-based methods \cite{zheng2011qos, cao2013hybrid, jiang2011effective, breese1998empirical, sarwar2001item} and model-based methods \cite{wu2018multiple, rangarajan2018qos, yang2018location, xie2016asymmetric, su2016web}. Memory-based methods predict unknown QoS values by employing the neighbourhood information of similar users and similar Web services \cite{zheng2010collaborative}, which further leads to user-based methods \cite{breese1998empirical}, service-based methods \cite{sarwar2001item} and hybrid methods \cite{zheng2011qos, cao2013hybrid, jiang2011effective} that systematically combine the user-based methods and service-based methods. Memory-based methods usually suffer from the data sparsity problem \cite{zhang2019covering, zheng2013personalized}, due to the limited number of Web services a single user will invoke. Model-based methods can deal with the problem of data sparsity, thus they have gained the most popularity \cite{zhang2019covering}. Model-based methods usually train a predefined prediction model based on existing QoS observations and then predict missing QoS values. For example, Wu et al. \cite{wu2017embedding} propose to train a factorization machine model for QoS prediction. Luo et al. \cite{luo2015web} introduce fuzzy neural networks and adaptive dynamic programming to predict QoS values. Matrix factorization is also a model-based technique and it has obtained the most attention \cite{zheng2013personalized, lo2015efficient, xie2016asymmetric, su2016web, zhang2019covering}. MF-based methods factorize the user-service matrix into two low-rank factor matrices with one factor matrix capturing the latent representations of users and another revealing the latent representations of Web services. Therefore, MF-based methods are able to automatically model the contributions to a specific QoS value from the user side and service side simultaneously, which usually results in better prediction performance. In addition, MF-based methods possess high flexibility of incorporating side information such as location \cite{he2014location}, contexts \cite{wu2018collaborative, wu2018multiple} and privacy \cite{liu2018differential}. MF-based methods can also be easily generalized for time-aware QoS prediction under the tensor factorization framework \cite{zhang2011wspred, zhang2014temporal, luo2019temporal, wang2019momentum}. There are also a few other kinds of time-aware QoS prediction methods like time series model-based methods \cite{ding2018time, amin2012automated} and neural networks-based methods \cite{xiong2018personalized}.

\subsection{Reliable QoS Prediction}

Although there are various QoS prediction methods,  few of them have taken outliers into consideration. However, as analyzed in Section~\ref{sec:mot}, some QoS observations indeed should be treated as outliers. Thus, the performance of existing methods may not be reliable. For example, most existing MF-based QoS prediction methods directly utilize $L_2$-norm to measure the discrepancy between the observed QoS values and the predicted ones \cite{zheng2013personalized, lo2015efficient, xu2016web, xie2016asymmetric, su2016web, wu2018collaborative}. It is widely accepted that $L_2$-norm is not robust to outliers \cite{cao2015low, xu2019adaptive, zhu2017robust}. As a consequence, the performance of MF-based methods may be severely influenced when QoS observations contain outliers.

In order to obtain reliable QoS prediction results, it is necessary to take outliers into consideration. One popular method to reduce the effects of outliers is replacing $L_2$-norm with $L_1$-norm because $L_1$-norm is more robust to outliers \cite{ke2005robust, eriksson2010efficient, zheng2012practical, meng2013cyclic}. For example, an $L_1$-norm low-rank MF-based QoS prediction method is introduced in \cite{zhu2018similarity}.  However, $L_1$-norm-based objective function is non-smooth and thus much harder to optimize \cite{xu2012sparse, meng2013robust}. Besides, $L_1$-norm is still sensitive to outliers, especially when outliers are far beyond the normal range of QoS values \cite{ding2017l1, xu2015multi}.

Another line of reliable QoS prediction is detecting outliers explicitly based on clustering algorithms. In \cite{wu2015qos}, Wu et al. propose a credibility-aware QoS prediction method, which employs a two-phase $k$-means clustering algorithm to identify untrustworthy users (i.e., outliers). Su et al. \cite{su2017tap} propose a trust-aware QoS prediction method, which provides reliable QoS prediction results via calculating the reputation of users by a beta reputation system and identifies outliers based on $k$-means clustering as well. In \cite{wu2019data}, a data-aware latent factor model is introduced, which utilizes the density peaks-based clustering algorithm \cite{rodriguez2014clustering} to detect unreliable QoS values.  However, it is difficult to choose a proper number of clusters, thus either some outliers may not be eliminated successfully or some normal values may be selected as outliers falsely. Our method does not detect outliers explicitly. Therefore it will not suffer from the misclassification issue.

\section{Conclusion} \label{sec:con}

In this paper, we have proposed a novel robust QoS prediction method, which utilizes Cauchy loss to measure the discrepancy between the observed QoS values and the predicted ones. Owing to the robustness of Cauchy loss, our method is resilient to outliers. That is, there is no need to detect outliers explicitly. Therefore, our method will not suffer from the problem of misclassification. Considering that the QoS performance may change over time, we have further extended our method to make it suitable for time-aware QoS prediction. To evaluate the efficiency and effectiveness of our method, we have conducted extensive experiments on both static and dynamic datasets. Experimental results have demonstrated that our method can achieve better performance than existing methods.


\bibliographystyle{ACM-Reference-Format}
\bibliography{CMFCTF}


\end{document}